\providecommand{\U}[1]{\protect\rule{.1in}{.1in}}
\newtheorem{theorem}{Theorem}
\newtheorem{definition}[theorem]{Definition}
\newenvironment{proof}[1][Proof]{\noindent\textbf{#1.} }{\ \rule{0.5em}{0.5em}}
\begin{document}

\title{Generating Higher-Order Lie Algebras by Expanding Maurer Cartan Forms.}
\author{Ricardo Caroca$^{1,2}$}
\email{rcaroca@ucsc.cl}
\author{Nelson Merino$^{1}$}
\email{nemerino@udec.cl}
\author{Alfredo P\'erez$^{1,3}$}
\email{perez@aei.mpg.de}
\author{Patricio Salgado$^{1}$}
\email{pasalgad@udec.cl}
\date{\today }

\begin{abstract}
By means of a generalization of the Maurer-Cartan expansion method we
construct a procedure to obtain expanded higher-order Lie algebras. The
expanded higher order Maurer-Cartan equations for the case
$\mathcal{G}=V_{0}\oplus V_{1}$ are found. 

A dual formulation for the S-expansion multialgebra procedure is also
considered. The expanded higher order Maurer Cartan equations are recovered
from S-expansion formalism by choosing a special semigroup. This dual
method could be useful in finding a generalization to the case of a
generalized free differential algebra, which may be relevant for physical
applications in, e.g., higher-spin gauge theories.
\end{abstract}

\affiliation{$^{1}$Departamento de F\'{\i}sica, Universidad de Concepci\'{o}n, Casilla 160-C, Concepci\'{o}n, Chile.\\
$^{2}$Departamento de Matem\'{a}tica y F\'{\i}sica Aplicadas, Universidad\\Cat\'{o}lica de la Sant\'{\i}sima Concepci\'{o}n, Alonso de Rivera 2850, Concepci\'{o}n, Chile. \\
$^{3}$ Max-Planck-Institut f\"{u}r Gravitationsphysik, Albert-Einstein-Institut, Am M\"{u}hlenberg 1, D-14476 Golm, Germany.
}

\maketitle

\section{Introduction}

A Lie algebra $\mathcal{G}$ with basis $\left\{  X_{i}\right\}  $, may be
realized by left-invariant generators $X_{i}$ on the corresponding group
manifold. If $C_{ij}^{\text{ \ }k}$ are the structure constants of
$\mathcal{G}$ in the basis $\left\{  X_{i}\right\}  ,$ then they satisfy
$\left[  X_{i},X_{j}\right]  =C_{ij}^{\text{ \ \ }k}X_{k}.$ If $\{\omega
^{i}(g)\}$, $i=1,...,r=dimG$, are the basis determined by the (dual,
left-invariant) Maurer--Cartan one-forms on $G$; then, the Maurer-Cartan
equations that characterize $\mathcal{G}$, in a way dual to its Lie bracket
description, are given by%
\begin{equation}
d\omega^{k}=-\frac{1}{2}C_{ij}^{\text{ \ }k}\omega^{i}\wedge\omega^{j},\text{
\ \ \ }i,j,k=1,...,r.\label{eins}%
\end{equation}

In direct analogy we can say that a higher-order Lie algebra $\left(
\mathcal{G},\left[  ,...,\right]  \right)  $ \cite{deazcarraga},
\cite{deazcarraga01}, \cite{deazcarraga0} with basis $\left\{  X_{i}\right\}
$, may be realized by left-invariant generators $X_{i}$ on the corresponding
group manifold. If $C_{i_{1}i_{2}\cdot\cdot\cdot\cdot i_{n}}^{\text{
\ \ \ \ \ \ \ \ \ \ \ \ \ }k}$ are the higher order structure constants of
$\left(  \mathcal{G},\left[  ,...,\right]  \right)  $ in the basis $\left\{
X_{i}\right\}  ,$ then they satisfy $\left[  X_{i_{1}},...,X_{i_{n}}\right]
=C_{i_{1}i_{2}\cdot\cdot\cdot\cdot i_{n}}^{\text{ \ \ \ \ \ \ \ \ \ \ \ \ \ }%
k}X_{k}$, where $\left[  X_{i_{1}},...,X_{i_{n}}\right]  $ are the called
higher order Lie bracket or multibracket$.$ If $\{\omega^{i}\}$,
$i=1,...,r=dimG$, are the basis determined by the (dual, left-invariant)
Maurer--Cartan one-forms on $G$; \ then, the generalized Maurer-Cartan
equations that characterize $\left(  \mathcal{G},\left[  ,...,\right]
\right)  $, in a way dual to its higher order Lie bracket description, are
given by
\begin{equation}
\tilde{d}\omega^{\sigma}=\frac{1}{\left(  2m-2\right)  !}\Omega_{i_{1}%
\cdot\cdot\cdot\cdot i_{2m-2}}^{\text{ \ \ \ \ \ \ \ \ \ \ \ \ \ \ \ \ }%
\sigma}\omega^{i_{1}}\wedge\cdot\cdot\cdot\wedge\omega^{i_{2m-2}},\label{zwei}%
\end{equation}
where $\widetilde{d_{m}}$ are the so-called higher-order exterior derivations.

As is noted in ref \cite{deazcarraga}, \cite{deazcarraga0} it could be
interesting to find applications of these higher-order Lie algebras to know
whether the cohomological restrictions which determine and conditions their
existence have a physical significance. \ Lie algebra cohomology arguments
have already been very useful in various physical problems as in the
description of anomalies or in the construction of the Wess-Zumino terms
required in the actions of extended supersymmetric objects. \ Other questions
may be posed from a purely mathematical point of view. From the discussion in
Sect.4 of ref. \cite{deazcarraga} we know that a representation of a simple
Lie algebra may not be a representation for the associated higher-order Lie
algebras. Thus, the representation theory of higher-order algebras requires a
separate analysis. A very interesting open problem from a structural point of
view is the expansions of higher-order Lie algebras, which will take us
outside the domain of the simple ones.

The purpose of this paper is to show that the expansion methods developed in
ref. \cite{deazcarraga1}, \cite{dualSexpansion} (see also \cite{hatsuda},
\cite{deazcarraga2}) can be generalized so that they permits obtaining new
higher-order Lie algebras of increasing dimensions from $\left(
\mathcal{G},\left[  ,...,\right]  \right)  $ by a geometric procedure based on
expanding the generalized Maurer Cartan equations.

The paper is organized as follows: In section 2 we shall review some aspects
of Generalized Maurer-Cartan equations. The main point of this section is to
display the differences between ordinary Maurer-Cartan equations and
Generalized Maurer-Cartan equations. In sections 3, 4 we generalize the
expansion methods developed in ref. \cite{deazcarraga1}, \cite{dualSexpansion}
and we give the general structure of the expansion method. \ Section 4 is
devoted to the dual S-expansion of higher-order Lie algebras. We close in
section 5 with conclusions and an outlook for future work.

\section{Generalized Maurer-Cartan equations}

In this section we shall review some aspects of the generalized Maurer-Cartan
equations. The main point of this section is to display the differences
between ordinary Maurer-Cartan equations and Generalized Maurer-Cartan
equations (see \cite{deazcarraga}, \cite{deazcarraga01}).
\begin{definition}
Let $\left\{  X_{i}\right\}  $ be a basis of $G$ given in terms of left
invariant vector fields on $G$, and $\wedge\ast(G)$ be the exterior algebra of
multivectors generated by them $\left(  X_{1}\wedge\cdot\cdot\cdot\wedge
X_{q}\equiv\varepsilon_{1\cdot\cdot\cdot\cdot q}^{i_{1}\cdot\cdot\cdot\cdot
i_{q}}X_{i_{1}}\otimes\cdot\cdot\cdot\otimes X_{i_{q}}\right)  $. The exterior
coderivation $\partial:\wedge^{q}\rightarrow\wedge^{q-1}$ is given by%
\begin{equation}
\partial\left(  X_{1}\wedge\cdot\cdot\cdot\wedge X_{q}\right)=\sum\limits_{\substack{l=1 \\l<k}}^{q}(-1)^{l+k+1}\left[  X_{l},X_{k}\right]
\wedge X_{1}\wedge\cdot\cdot\cdot\wedge\overset{\wedge}{X}_{l}\wedge\cdot
\cdot\cdot\wedge\overset{\wedge}{X}_{k}\cdot\cdot\cdot\wedge X_{q}.\label{uno}%
\end{equation}
\end{definition}

This definition is analogous to that of the exterior derivative $d$, as given
by the Palais formula \cite{deazcarraga01} with its first term missing when
one considers left-invariant forms (see $eq.(2.4)$ ref. \cite{deazcarraga01}).
As $d$, $\partial$ is nilpotent, $\partial^{2}=0$, due to the Jacobi Identity
for the commutator. In order to generalize (\ref{uno}), let us note that
$\partial(X_{1}\wedge X_{2})=[X_{1},X_{2}]$, so that (\ref{uno}) can be
interpreted as a formula that gives the action of $\partial$ on a $q$-vector
in terms of that on a bivector. For this reason we may write $\partial_{2}$
for $\partial$ above. It is then natural to introduce an operator
$\partial_{s}$ that on a $s$-vector gives the multicommutator of order $s$.
\begin{definition}
The general coderivation $\partial_{s}$ of degree $(s-1)$ ($s$ even)
$\partial_{s}$ : $\wedge^{n}(G)\rightarrow\wedge^{n-(s-1)}(G)$ is defined by
the action on an $n$-multivector%
\begin{align}
\partial_{s}\left(  X_{1}\wedge\cdot\cdot\cdot\wedge X_{n}\right)&=\frac{1}{s!}\frac{1}{\left(  n-s\right)  !}\varepsilon_{1\cdot\cdot\cdot\cdot
n}^{i_{1}\cdot\cdot\cdot\cdot i_{n}}\partial_{s}\left(  X_{i_{1}}\wedge
\cdot\cdot\cdot\wedge X_{i_{s}}\right)  \wedge X_{i_{s+1}}\wedge\cdot
\cdot\cdot\wedge X_{i_{n}},\label{drei} \\
\partial_{s}\wedge^{n}(G)&=0\text{ \ for }s>n, \\
\partial_{s}\left(  X_{1}\wedge\cdot\cdot\cdot\wedge X_{s}\right)  &=\left[
X_{1}\wedge\cdot\cdot\cdot\wedge X_{s}\right] , \\
\partial_{s}^{2}&\equiv0 .
\end{align}
\end{definition}%

We may now introduce the corresponding dual higher-order derivations
$\tilde{d}_{s}$ to provide a generalization of the Maurer-Cartan equations.
Since $\partial_{s}$ was defined on multivectors that are products of
left-invariant vector fields, the dual $\tilde{d}_{s}$ will be given for
left-invariant forms.

It is easy to introduce dual bases in $\wedge_{n}$ and in $\wedge^{n}$. With
$\omega^{i}(X_{j})=\delta_{ij}$, a pair of dual bases $\wedge_{n},$
$\wedge^{n}$are given by $\omega^{I_{1}}\wedge\cdot\cdot\cdot\wedge
\omega^{I_{n}},$ $\frac{1}{n!}X_{I_{1}}$ $\wedge...\wedge$ $X_{I_{n}}$
\ $(I_{1}<...<I_{n})$ since $\left(  \varepsilon_{j_{1}\cdot\cdot\cdot\cdot
j_{n}}^{i_{1}\cdot\cdot\cdot\cdot i_{n}}\omega^{j_{1}}\otimes\cdot\cdot
\cdot\otimes\omega^{j_{n}}\right)  \left(  \frac{1}{n!}\varepsilon_{l_{1}%
\cdot\cdot\cdot\cdot l_{n}}^{k_{1}\cdot\cdot\cdot\cdot k_{n}}X_{k_{1}}%
\otimes\cdot\cdot\cdot\otimes X_{k_{n}}\right)  =\varepsilon_{l_{1}\cdot
\cdot\cdot\cdot l_{n}}^{i_{1}\cdot\cdot\cdot\cdot i_{n}}$ y $\varepsilon
_{L_{1}\cdot\cdot\cdot\cdot L_{n}}^{I_{1}\cdot\cdot\cdot\cdot I_{n}}$ is $1$
if all indices coincide and $0$ otherwise. Nevertheless it is customary to use
the non-minimal set $\omega^{i_{1}}\wedge\cdot\cdot\cdot\wedge\omega^{i_{n}}$
to write \ $\alpha=\frac{1}{n!}\alpha_{i_{1}\cdot\cdot\cdot\cdot i_{n}}%
\omega^{i_{1}}\wedge\cdot\cdot\cdot\wedge\omega^{i_{n}}$. Since $\left(
\omega^{i_{1}}\wedge\cdot\cdot\cdot\wedge\omega^{i_{n}}\right)  \left(
X_{j_{1}},\cdot\cdot\cdot,X_{j_{n}}\right)  =\varepsilon_{j_{1}\cdot\cdot
\cdot\cdot j_{n}}^{i_{1}\cdot\cdot\cdot\cdot i_{n}},$ \ it is clear that
$\alpha_{i_{1}\cdot\cdot\cdot\cdot i_{n}}=\alpha\left(  X_{i_{1}},\cdot
\cdot\cdot,X_{i_{n}}\right)  =\frac{1}{n!}\left(  X_{i_{1}}\wedge\cdot
\cdot\cdot\wedge X_{i_{n}}\right)  .$
\begin{definition}
The action of $\tilde{d}_{m}$: $\wedge_{n}\rightarrow\wedge_{n+(2m-3)}$
(remember that $s=2m-2$) on $\alpha\in\wedge_{n}$is given by
\cite{deazcarraga01}
\begin{equation}
\tilde{d}_{m}\alpha\left(  X_{i_{1}},\cdot\cdot\cdot,X_{i_{n+2m-3}}\right)=\frac{1}{\left(  2m-2\right)  !\left(  n-1\right)  !}\varepsilon_{i_{1}%
\cdot\cdot\cdot\cdot i_{n+2m-3}}^{j_{1}\cdot\cdot\cdot\cdot j_{n+2m-3}}%
\alpha\left(  \left[  X_{j_{1}},\cdot\cdot\cdot,X_{j_{2m-2}}\right]
,X_{j_{2m-1}},\cdot\cdot\cdot X_{j_{n+2m-3}}\right)  ,\label{vier}%
\end{equation}%
\begin{equation}
\left(  \tilde{d}_{m}\alpha\right)  _{i_{1}\cdot\cdot\cdot\cdot i_{n+2m-3}%
}=\frac{1}{\left(  2m-2\right)  !\left(  n-1\right)  !}\varepsilon_{i_{1}%
\cdot\cdot\cdot\cdot i_{n+2m-3}}^{j_{1}\cdot\cdot\cdot\cdot j_{n+2m-3}}%
\Omega_{j_{1}\cdot\cdot\cdot\cdot j_{2m-2}}^{\text{
\ \ \ \ \ \ \ \ \ \ \ \ \ \ \ \ }\rho}\alpha_{\rho j_{2m-1}\cdot\cdot
\cdot\cdot j_{n+2m-3}}.\label{funf}%
\end{equation}
\end{definition}

From (\ref{vier}-\ref{funf}) we can see that the coordinates of
$\tilde{d}_{m}\omega^{\sigma}$ are given by%
\begin{align}
\left(  \tilde{d}_{m}\omega^{\sigma}\right)  \left(  X_{i_{1}},\cdot\cdot
\cdot,X_{i_{2m-2}}\right)  &=\frac{1}{\left(  2m-2\right)  !}\varepsilon
_{i_{1}\cdot\cdot\cdot\cdot i_{2m-2}}^{j_{1}\cdot\cdot\cdot\cdot j_{2m-2}%
}\omega^{\sigma}\left(  \left[  X_{j_{1}},\cdot\cdot\cdot,X_{j_{2m-2}}\right]
\right)  ,\label{sechs}   \\
\left(  \tilde{d}_{m}\omega^{\sigma}\right)  \left(  X_{i_{1}},\cdot\cdot
\cdot,X_{i_{2m-2}}\right)
&=\omega^{\sigma}\left(  \left[  X_{i_{1}},\cdot\cdot\cdot,X_{i_{2m-2}}\right]
\right)  =\omega^{\sigma}\Omega_{i_{1}\cdot\cdot\cdot\cdot i_{2m-2}}^{\text{
\ \ \ \ \ \ \ \ \ \ \ \ \ \ \ \ }\rho}X_{\rho}=\Omega_{i_{1}\cdot\cdot
\cdot\cdot i_{2m-2}}^{\text{ \ \ \ \ \ \ \ \ \ \ \ \ \ \ \ \ }\sigma
},\label{sieben}%
\end{align}
from which we conclude that%
\begin{equation}
\tilde{d}_{m}\omega^{\sigma}=\frac{1}{\left(  2m-2\right)  !}\Omega
_{i_{1}\cdot\cdot\cdot\cdot i_{2m-2}}^{\text{
\ \ \ \ \ \ \ \ \ \ \ \ \ \ \ \ }\sigma}\omega^{i_{1}}\wedge\cdot\cdot
\cdot\wedge\omega^{i_{2m-2}}.\label{dos}%
\end{equation}
For $m=2$, $\tilde{d}_{m}=-d$, equation (\ref{dos}) reproduces the usual
Maurer-Cartan equations. The equation (\ref{dos}) is called "the generalized
Maurer-Cartan equation". \ In the compact notation that uses the canonical
one-form $\theta$, we can say the the action of $\tilde{d}_{m}$ on the
canonical form $\theta$ is given by \cite{deazcarraga01}
\begin{equation}
\tilde{d}_{m}\theta=\frac{1}{\left(  2m-2\right)  !}\left[  \theta
,\theta,\overset{2m-2}{\cdot\cdot\cdot\cdot},\theta\right], \label{free index}%
\end{equation}
where the multibracket of form is defined by
\begin{equation}
\left[  \theta,\theta,\overset{2m-2}{\cdot\cdot\cdot\cdot},\theta\right]
=\omega^{i_{1}}\wedge\cdot\cdot\cdot\wedge\omega^{i_{2m-2}}\left[  X_{i_{1}%
},\cdot\cdot\cdot,X_{i_{2m-2}}\right]. \label{free index2}%
\end{equation}

Using Leibniz's rule for the $\tilde{d}_{m}$ operator we arrive at%
\begin{equation}
\tilde{d}_{m}^{2}\theta=-\frac{1}{\left(  2m-2\right)  !}\frac{1}{\left(
2m-3\right)  !}\left[  \theta,\overset{2m-3}{\cdot\cdot\cdot\cdot}%
,\theta,\left[  \theta,\overset{2m-2}{\cdot\cdot\cdot\cdot},\theta\right]
\right]  =0,
\end{equation}
which again expresses the Generalized Jacobi Identity.

\section{Expanding higher order Lie algebras by rescaling some coordinates of
the group manifold}

\subsection{\textbf{The higher-order Lie algebras} $\left(  \mathcal{G(N)}%
,\left[  ,...,\right]  \right)  $\textbf{\ generated from }$\left(
\mathcal{G},\left[  ,...,\right]  \right)  $, \textbf{when} $\mathcal{G}%
=V_{0}\oplus V_{1}$.}

The generalized Maurer-Cartan equations that characterize the multialgebra
$\left(  \mathcal{G},\left[  ,...,\right]  \right)  $, in a way dual to its
higher order Lie bracket description, are given by%
\begin{equation}
\tilde{d}_{m}\omega^{k}(g)=\frac{1}{\left(  2m-2\right)  !}C_{i_{1}%
...i_{2m-2}}^{k}\omega^{i_{1}}(g)\wedge...\wedge\omega^{i_{2m-2}%
}(g).\label{g1}%
\end{equation}

Consider the splitting of $\mathcal{G}^{\ast}$ into the sum of two vector
subspaces, $\mathcal{G}^{\ast}=V_{0}^{\ast}\oplus V_{1}^{\ast}$, \ where
$V_{0}^{\ast}$ and $V_{1}^{\ast}$ are generated by the Maurer-Cartan forms
$\omega^{i_{0}}\left(  g\right)  $ and $\omega^{i_{1}}\left(  g\right)  $ of
$\mathcal{G}^{\ast}$ with indices corresponding, respectively, to the
unmodified and modified parameters,%

\begin{equation}
g^{i_{0}}\rightarrow g^{i_{0}}\text{, \ \ }g^{i_{1}}\rightarrow\lambda
g^{i_{1}}\text{, \ \ }i_{0}\left(  i_{1}\right)  =1,...,\dim V_{0}\left(
V_{1}\right)  .\label{g2}%
\end{equation}

In general, the series of $\omega^{i_{0}}\left(  g,\lambda\right)  \in
V_{0}^{\ast}$ and $\omega^{i_{1}}\left(  g,\lambda\right)  \in V_{1}^{\ast},$
will involve all powers of $\lambda,$
\begin{equation}
\omega^{i_{p}}\left(  g,\lambda\right)  =\sum_{\alpha=0}^{\infty}%
\lambda^{\alpha}\omega^{i_{p},\alpha}\left(  g\right)  \text{, \ \ }%
p=0,1\text{.}\label{g3}%
\end{equation}

In terms of the 1-forms $\omega^{i_{p}}$, the generalized Maurer-Cartan
equations (\ref{g1}) take the form%

\begin{equation}
\tilde{d}_{m}\omega^{k_{s}}\left(  g\right)=\frac{1}{\left(  2m-2\right)  !}C_{\underset{\left(  2m-2\right)
\text{-indexes}}{\underbrace{i_{p}\ j_{q}\ m_{r}...n_{t}}}}^{k_{s}}%
\underset{\left(  2m-2\right)  \text{-indexes}}{\underbrace{\omega^{i_{p}%
}\left(  g\right)  \wedge\omega^{j_{q}}\left(  g\right)  \wedge\omega^{m_{r}%
}\left(  g\right)  ...\wedge\omega^{n_{t}}\left(  g\right)  }},\label{g4_0}%
\end{equation}
where, on the right side of equation (\ref{g1}) there is an implicit sum on
the (2m-2) indices $i_{p},j_{q},m_{r},\cdot\cdot\cdot,n_{t}=1,\cdot\cdot
\cdot,\dim V_{p}(V_{q})(V_{r})\cdot\cdot\cdot(V_{t})$ and on the (2m-2)
indices $p,q,r,\cdot\cdot\cdot,t=0,1$. Explicitly we have%

\[
\tilde{d}_{m}\omega^{k_{s}}\left(  g\right)  =\frac{1}{\left(  2m-2\right)
!}\sum_{p,q,r,...,t=0}^{1}\bullet
\]%
\begin{equation}
\bullet\sum_{i_{p},j_{q},m_{r},...,n_{t}=0}^{\dim V_{p}(V_{q})\left(
V_{r}\right)  ...(V_{t})}C_{\underset{\left(  2m-2\right)  \text{-indexes}%
}{\underbrace{i_{p}\ j_{q}\ m_{r}...n_{t}}}}^{k_{s}}\underset{\left(
2m-2\right)  \text{-indexes}}{\underbrace{\omega^{i_{p}}\left(  g\right)
\wedge\omega^{j_{q}}\left(  g\right)  \wedge\omega^{m_{r}}\left(  g\right)
...\wedge\omega^{n_{t}}\left(  g\right)  }}.\label{g4_01}%
\end{equation}

However, in general, we will consider the sums implicitly. We will denote the
set of (2m-2) indices \ $i,j,m,...,n=0,...,\dim\mathcal{G}$,
\ \ \ \ \ \ \ \ \ \ \ \ \ \ \ \ \ \ \ \ \ \ \ \ \ \ by $i^{l}$, where
$i^{1}=i,$ $i^{2}=j,$ $i^{3}=k,$ $\cdot\cdot\cdot,i^{2m-2}=n$ \ i.e.,
$i^{l}=0,...,\dim\mathcal{G}$; $l=1,...,2m-2.$ \ Since $\mathcal{G}%
=V_{0}\oplus V_{1},$ we have that the set of $(2m-2)$ indices
$p,q,r,...,t=0,1$ is useful to indicate that the forms $\omega^{i_{p}^{1}}$,
$\omega^{i_{q}^{2}}$,...,$\omega^{i_{t}^{2m-2}}$ belong to the subspaces
$V_{p}^{\ast}$, $V_{q}^{\ast}$,..., $V_{t}^{\ast}$ respectively. \ This allows
to denote the set of indices $p,q,r,...,t=0,1$ with the index $p_{l}$, where
the index $l$ reproduces the $(2m-2)$ indices: $p_{1}=p$, $p_{2}=q$, $p_{3}%
=r$,$\cdot\cdot\cdot\cdot,p_{2m-2}=t.$

With this notation, the generalized Maurer-Cartan equations \ (\ref{g4_0})
take the form%
\begin{equation}
\tilde{d}_{m}\omega^{k_{s}}\left(  g\right)  =\frac{1}{\left(  2m-2\right)
!}C_{i_{p_{1}}^{1}...i_{p_{2m-2}}^{2m-2}}^{k_{s}}\omega^{i_{p_{1}}^{1}}\left(
g\right)  \wedge...\wedge\omega^{i_{p_{2m-2}}^{2m-2}}\left(  g\right),
\label{g4}%
\end{equation}
where we have sumed over $i_{p_{l}}^{l}$ and over $p_{l}=0,1$ for every
$l=1,...,2m-2$. \ One might think that the super-index in $i_{p_{l}}^{l}$ is
superfluous. \ However the super-index $l$ is really necessary, for example,
to distinguish the independent sums existing over the indices $i_{p_{l}}^{l}$
and $i_{p_{l+1}}^{l+1}$ when $p_{l}=p_{l+1}$. In the compact notation
that uses the canonical one-form $\theta=\omega^{k_{s}}X_{k_{s}}$
\cite{deazcarraga01}, the eq. (\ref{g4}) can be written as
\begin{equation}
\tilde{d}_{m}\theta=\frac{1}{\left(  2m-2\right)  !}\left[  \theta
,\theta,\overset{2m-2}{\cdot\cdot\cdot\cdot},\theta\right], \label{corr1}%
\end{equation}
where the multibracket of forms is defined by
\begin{equation}
\left[  \theta,\theta,\overset{2m-2}{\cdot\cdot\cdot\cdot},\theta\right]
=\omega^{i_{p_{1}}^{1}}\wedge\cdot\cdot\cdot\wedge\omega^{i_{p_{2m-2}}^{2m-2}%
}\left[  X_{i_{p_{1}}^{1}},\cdot\cdot\cdot,X_{i_{p_{2m-2}}^{2m-2}}\right].
\label{corr3}%
\end{equation}

Following the procedure of Ref. \cite{deazcarraga1} we now insert the
expansions (\ref{g3}) into the Maurer-Cartan equations (\ref{g1}). After
tedious but direct calculation we obtain%
\begin{equation}
\tilde{d}_{m}\omega^{k_{s},\alpha}=\frac{1}{\left(  2m-2\right)  !}C_{\left(
i_{p_{1}}^{1},\beta^{1}\right)  \cdot\cdot\cdot\left(  i_{p_{2m-2}}%
^{2m-2},\beta^{2m-2}\right)  }^{\left(  k_{s},\alpha\right)  }\omega
^{i_{p_{1}}^{1},\beta^{1}}\wedge\cdot\cdot\cdot\wedge\omega^{i_{p_{2m-2}%
}^{2m-2},\beta^{2m-2}},\label{g8}%
\end{equation}
where on the right side, besides the sums over $i_{p_{l}}^{l}$ and $p_{l}$, a
sum exists over $\beta^{l}=0,1,\cdot\cdot\cdot\cdot,\alpha$ for every
$l=0,...,2m-2$ and where
\begin{equation}
C_{\left(  i_{p_{1}}^{1},\beta^{1}\right)  \cdot\cdot\cdot\cdot\cdot\left(
i_{p_{2m-2}}^{2m-2},\beta^{2m-2}\right)  }^{\left(  k_{s},\alpha\right)
}=C_{i_{p_{1}}^{1}\cdot\cdot\cdot\cdot i_{p_{2m-2}}^{2m-2}}^{k_{s}}%
\delta_{\beta^{1}+\cdot\cdot\cdot\cdot+\beta^{2m-2}}^{\alpha}\text{.}%
\label{g9}%
\end{equation}

In the compact notation that now uses the canonical one-form $\theta^{\left(  N\right)  }=\omega^{k_{s},\alpha}X_{k_{s},\alpha}%
$ of the expanded multialgebra, $\left(  \mathcal{G}\left(
N\right)  ,\left[  ,...,\right]  \right)  $, the eq. (\ref{g8}) can
be written as 
\begin{equation}
\tilde{d}_{m}\theta^{\left(  N\right)  }=\frac{1}{\left(  2m-2\right)
!}\left[  \theta^{\left(  N\right)  },\theta^{\left(  N\right)  }%
,\overset{2m-2}{\cdot\cdot\cdot\cdot},\theta^{\left(  N\right)  }\right],
\label{corr4}%
\end{equation}
where the multibracket of forms is defined by
\begin{equation}
\left[  \theta^{\left(  N\right)  },\theta^{\left(  N\right)  },\overset
{2m-2}{\cdot\cdot\cdot\cdot},\theta^{\left(  N\right)  }\right]
=\omega^{i_{p_{1}}^{1},\beta^{1}}\wedge\cdot\cdot\cdot\wedge\omega
^{i_{p_{2m-2}}^{2m-2},\beta^{2m-2}}\left[  X_{i_{p_{1}}^{1},\beta^{1}}%
,\cdot\cdot\cdot,X_{i_{p_{1}}^{1},\beta^{1}}\right]  .\label{corr5}%
\end{equation}
Note that $\left\{  X_{k_{s},\alpha}\right\}  $ is the basis of $\left(
\mathcal{G}\left(  N\right)  ,\left[  ,...,\right]  \right)  $ while $\left\{
\omega^{k_{s},\alpha}\right\}  $ is the dual basis.

The generalized Jacobi identity is obtained the calculation of $\tilde{d}%
_{m}^{2}\omega^{k_{s},\alpha}$. The equations (\ref{g8})-(\ref{g9}) are the
direct generalization to the case of higher order Lie algebras of the
equations $(2.15)$ of the Ref. \cite{deazcarraga1}.

The following theorem generalizes the theorem $1$ of Ref. \cite{deazcarraga1}
to the case of higher order Lie algebras and it establishes the conditions
under which the $1$-forms $\omega^{i_{0},\alpha_{0}},$ $\omega^{i_{1}%
,\alpha_{1}}$ generate new higher order Lie algebras.

\begin{theorem}
Let $\left(  \mathcal{G},\left[  ,...,\right]  \right)  $ be a higher order
Lie algebra and $\mathcal{G}=V_{0}\oplus V_{1}$ (no higher order Lie
sub-algebra conditions are assumed, neither for $V_{0}$ nor for $V_{1}$). Let
$\left\{  \omega^{i}\right\}  $, $\left\{  \omega^{i_{0}}\right\}  $,
$\left\{  \omega^{i_{1}}\right\}  $ ($i=1,\cdot\cdot\cdot,\dim\mathcal{G}$,
$i_{0}=1,\cdot\cdot\cdot\cdot,\dim V_{0}$, $i_{1}=1,\cdot\cdot\cdot\cdot,\dim
V_{1}$) be, respectively, the bases of the $\mathcal{G}^{\ast}$, $V_{0}^{\ast
}$ and $V_{1}^{\ast}$ dual vector spaces. Then, the vector space generated by
\begin{equation}
\left\{  \omega^{i_{0},0},\omega^{i_{0},1},\cdot\cdot\cdot,\omega^{i_{0}%
,N};\omega^{i_{1},0},\omega^{i_{1},1},\cdot\cdot\cdot\cdot,\omega^{i_{1}%
,N}\right\}, \label{g10}%
\end{equation}
together with the generalized Maurer-Cartan equations (\ref{g8}) for the
structure constants (\ref{g9}) determine a higher order Lie algebra
$\mathcal{G}\left(  N\right)  $ for each expansion order $N\geq0\,$\ of
dimension $\dim\mathcal{G}\left(  N\right)  =\left(  N+1\right)
\dim\mathcal{G}$.
\end{theorem}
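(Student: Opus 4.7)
The plan is to verify that the structure constants (\ref{g9}) and the basis (\ref{g10}) together define a higher-order Lie algebra by checking (i) closure of the multibracket on (\ref{g10}), (ii) the generalized Jacobi identity (GJI) at every expansion order $\alpha\leq N$, and (iii) the dimension formula. Items (i) and (iii) are bookkeeping; the real content is in (ii).

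Closure will follow immediately from the factor $\delta^{\alpha}_{\beta^{1}+\cdots+\beta^{2m-2}}$ appearing in (\ref{g9}): since the $\beta^{l}$ are non-negative integers whose sum is $\alpha$, each $\beta^{l}\leq\alpha\leq N$, so every form $\omega^{i^{l}_{p_{l}},\beta^{l}}$ on the right-hand side of (\ref{g8}) already lies in (\ref{g10}). Dually, each multibracket $[X_{i^{1}_{p_{1}},\beta^{1}},\ldots,X_{i^{2m-2}_{p_{2m-2}},\beta^{2m-2}}]$ either produces $X_{k_{s},\sum\beta^{l}}$ with $\sum\beta^{l}\leq N$, which sits in $\mathcal{G}(N)$, or is declared zero under the truncation.

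For the GJI I would work dually, bypassing a head-on attack on the $(4m-5)$-term algebraic identity. The identity $\tilde{d}_{m}^{2}=0$ encodes the GJI of the unexpanded $(\mathcal{G},[,\ldots,])$; applied to the $\lambda$-dependent Maurer-Cartan equation (\ref{g1}) obtained from the rescaling (\ref{g2}), it gives $\tilde{d}_{m}^{2}\omega^{k_{s}}(g,\lambda)=0$ as an identity in $\lambda$. Inserting the series (\ref{g3}) and extracting the coefficient of $\lambda^{\alpha}$ yields $\tilde{d}_{m}^{2}\omega^{k_{s},\alpha}=0$ for every $\alpha\geq 0$; unfolded via Leibniz together with (\ref{g8}), this is precisely the algebraic GJI for the structure constants (\ref{g9}) at order $\alpha$. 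For $\alpha\leq N$, the closure observation above ensures that the contracted intermediate index $(\rho,\gamma)$ also satisfies $\gamma\leq\alpha\leq N$, so the identity lives entirely inside the truncated algebra. One can also verify this directly at the algebraic level: substituting (\ref{g9}) into the abstract GJI, the two Kronecker factors contract into a single $\delta^{\alpha}_{\beta^{1}+\cdots+\beta^{4m-5}}$ multiplying the GJI for the original $C^{k_{s}}_{i^{1}_{p_{1}}\cdots i^{2m-2}_{p_{2m-2}}}$, which vanishes by hypothesis.

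The dimension formula is then the direct count $(N+1)(\dim V_{0}+\dim V_{1})=(N+1)\dim\mathcal{G}$. I expect the main obstacle to be not conceptual but notational, namely controlling the multi-index combinatorics of a higher-order GJI involving $4m-5$ generators and nested summations over expansion orders and subspace labels. The Maurer-Cartan route above is exactly what lets this bookkeeping collapse, since the whole identity is inherited from $\tilde{d}_{m}^{2}=0$ on the unexpanded algebra and the $\lambda$-expansion does the combinatorics automatically.
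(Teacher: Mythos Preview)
Your proposal is correct and follows essentially the same route as the paper: closure is argued from the constraint $\beta^{1}+\cdots+\beta^{2m-2}=\alpha$ forcing each $\beta^{l}\leq\alpha\leq N$, the generalized Jacobi identity is reduced to that of the original multialgebra by substituting (\ref{g9}) into $\tilde{d}_{m}^{2}\omega^{k_{s},\alpha}=0$ and factoring out the Kronecker deltas, and the dimension is a direct count. The paper computes $\tilde{d}_{m}^{2}\omega^{k_{s},\alpha}$ explicitly via Leibniz on the expanded Maurer--Cartan equations rather than first expanding $\tilde{d}_{m}^{2}\omega^{k_{s}}(g,\lambda)=0$ in powers of $\lambda$, but your second (algebraic) verification is exactly what appears there, and the two viewpoints are interchangeable.
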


\begin{proof}
The generalized Maurer-Cartan equations (\ref{g8}-\ref{g9}) can be written as
\begin{equation}
\tilde{d}_{m}\omega^{k_{s},\alpha}=\frac{1}{\left(  2m-2\right)  !}%
\sum\limits_{\beta^{1},\cdot\cdot\cdot\cdot,\beta^{2m-2}=0}^{\alpha
}C_{i_{p_{1}}^{1}\cdot\cdot\cdot\cdot i_{p_{2m-2}}^{2m-2}}^{k_{s}}%
\delta_{\beta^{1}+\cdot\cdot\cdot\cdot+\beta^{2m-2}}^{\alpha}\omega^{i_{p_{1}%
}^{1},\beta^{1}}\wedge\cdot\cdot\cdot\wedge\omega^{i_{p_{2m-2}}^{2m-2}%
,\beta^{2m-2}}.
\end{equation}
Let's remember that we have sums over $p_{l}$ and over $\beta^{l}$ such that
$\alpha=\beta^{1}+\cdot\cdot\cdot+\beta^{2m-2}.$ \ We can see that for
$\alpha=N_{0}$%
\begin{equation}
\tilde{d}_{m}\omega^{k_{0},N_{0}}=\frac{1}{\left(  2m-2\right)  !}%
\sum\limits_{\beta^{1},\cdot\cdot\cdot,\beta^{2m-2}=0}^{N_{0}}C_{i_{p_{1}}%
^{1}\cdot\cdot\cdot i_{p_{2m-2}}^{2m-2}}^{k_{0}}\delta_{\beta^{1}+\cdot
\cdot\cdot+\beta^{2m-2}}^{N_{0}}\omega^{i_{p_{1}}^{1},\beta^{1}}\wedge
\cdot\cdot\cdot\wedge\omega^{i_{p_{2m-2}}^{2m-2},\beta^{2m-2}},%
\end{equation}
appear in the sum terms that contain $1$-forms $\omega^{i_{_{1}}^{l},N_{0}}$,
whereas
\begin{equation}
\tilde{d}_{m}\omega^{k_{1},N_{1}}=\frac{1}{\left(  2m-2\right)  !}%
\sum\limits_{\beta^{1},\cdot\cdot\cdot,\beta^{2m-2}=0}^{N_{1}}C_{i_{p_{1}}%
^{1}\cdot\cdot\cdot i_{p_{2m-2}}^{2m-2}}^{k_{1}}\delta_{\beta^{1}+\cdot
\cdot\cdot+\beta^{2m-2}}^{N_{0}}\omega^{i_{p_{1}}^{1},\beta^{1}}\wedge
\cdot\cdot\cdot\wedge\omega^{i_{p_{2m-2}}^{2m-2},\beta^{2m-2}},%
\end{equation}
appear in the sum terms that contain $1$-forms $\omega^{i_{_{1}}^{l},N_{1}}. $
\ Wherefrom we see that the forms $\omega^{i_{_{1}}^{l},N_{0}}$ and
$\omega^{i_{_{1}}^{l},N_{1}}$, for any $l=1,\cdot\cdot\cdot,2m-2$, are in the
base (\ref{g10}),\ if and only if \ $N_{0}=N_{1}=N.$ This means that the set
\begin{equation}
\left\{  \omega^{i_{0},0},\omega^{i_{0},1},\cdot\cdot\cdot,\omega^{i_{0}%
,N};\omega^{i_{1},0},\omega^{i_{1},1},\cdot\cdot\cdot\cdot,\omega^{i_{1}%
,N}\right\},
\end{equation}
generates a higher order Lie algebra of dimension
\begin{equation}
\dim\mathcal{G}\left(  N\right)  =\left(  N+1\right)  \dim V_{0}+\left(
N+1\right)  \dim V_{1}=\left(  N+1\right)  \dim\mathcal{G}\text{.}%
\end{equation}

To prove that the generalized Jacobi identity is satisfied we calculate
$\tilde{d}_{m}^{2}\omega^{k_{s},\alpha}:$%
\begin{align}
\tilde{d}_{m}^{2}\omega^{k_{s},\alpha}&=\frac{1}{\left(  2m-2\right)
!}C_{\left(  i_{p_{1}}^{1},\beta^{1}\right)  \cdot\cdot\cdot\cdot\cdot\left(
i_{p_{2m-2}}^{2m-2},\beta^{2m-2}\right)  }^{\left(  k_{s},\alpha\right)
}\tilde{d}_{m}\left(  \omega^{i_{p_{1}}^{1},\beta^{1}}\wedge\cdot\cdot
\cdot\cdot\wedge\omega^{i_{p_{2m-2}}^{2m-2},\beta^{2m-2}}\right) , \nonumber \\
&=\frac{\left(  2m-2\right)  !}{\left(
2m-2\right)  !}C_{\left(  i_{p_{1}}^{1},\beta^{1}\right)  \cdot\cdot\cdot
\cdot\cdot\left(  i_{p_{2m-2}}^{2m-2},\beta^{2m-2}\right)  }^{\left(
k_{s},\alpha\right)  }\tilde{d}_{m}\omega^{i_{p_{1}}^{1},\beta^{1}}\wedge
\cdot\cdot\cdot\cdot\wedge\omega^{i_{p_{2m-2}}^{2m-2},\beta^{2m-2}}, \nonumber \\ 
&=\frac{1}{\left(  2m-2\right)  !\left(
2m-3\right)  !}C_{\left(  i_{p_{1}}^{1},\beta^{1}\right)  [\left(  i_{p_{2}%
}^{2},\beta^{2}\right)  \cdot\cdot\cdot\cdot\cdot\left(  i_{p_{2m-2}}%
^{2m-2},\beta^{2m-2}\right)  }^{\left(  k_{s},\alpha\right)  }C_{\left(
j_{p_{1}}^{1},\gamma^{1}\right)  \cdot\cdot\cdot\cdot\cdot\left(  j_{p_{2m-2}%
}^{2m-2},\gamma^{2m-2}\right)  ]}^{\left(  i_{p_{1}}^{1},\beta^{1}\right)  } \times \nonumber \\%
&\times\left(  \omega^{j_{p_{1}}^{1},\gamma^{1}}\wedge\cdot\cdot\cdot
\cdot\wedge\omega^{j_{p_{2m-2}}^{2m-2},\gamma^{2m-2}}\right)  \wedge\left(
\omega^{i_{p_{2}}^{2},\beta^{2}}\wedge\cdot\cdot\cdot\cdot\wedge
\omega^{i_{p_{2m-2}}^{2m-2},\beta^{2m-2}}\right)  =0.
\end{align}
Therefore
\begin{equation}
C_{\left(  i_{p_{1}}^{1},\beta^{1}\right)  [\left(  i_{p_{2}}^{2},\beta
^{2}\right)  \cdot\cdot\cdot\cdot\cdot\left(  i_{p_{2m-2}}^{2m-2},\beta
^{2m-2}\right)  }^{\left(  k_{s},\alpha\right)  }C_{\left(  j_{p_{1}}%
^{1},\gamma^{1}\right)  \cdot\cdot\cdot\cdot\cdot\left(  j_{p_{2m-2}}%
^{2m-2},\gamma^{2m-2}\right)  ]}^{\left(  i_{p_{1}}^{1},\beta^{1}\right)
}=0.\label{g11}%
\end{equation}
Introducing (\ref{g11}) into (\ref{g9}) we find
\begin{equation}
\delta_{\beta^{1}+\cdot\cdot\cdot\cdot+\beta^{2m-2}}^{\alpha}\delta
_{\gamma^{1}+\cdot\cdot\cdot\cdot+\gamma^{2m-2}}^{\beta^{1}}C_{i_{p_{1}}%
^{1}[i_{p_{2}}^{2}\cdot\cdot\cdot\cdot i_{p_{2m-2}}^{2m-2}}^{k_{s}}%
C_{j_{p_{1}}^{1}\cdot\cdot\cdot\cdot j_{p_{2m-2}}^{2m-2}]}^{i_{p_{1}}^{1}}=0,
\end{equation}
which is satisfied identically due to the validity of the generalized Jacobi
identity for the original multialgebra $\left(  \mathcal{G},\left[
,...,\right]  \right)  $.
\end{proof}

\subsection{\textbf{The case in which }$V_{0}$\textbf{\ is a subalgebra of
}$\mathcal{G}$\textbf{\ and of a submultialgebra }$\left(  \mathcal{G},\left[
,...,\right]  \right)  $}

Let $\left(  \mathcal{G},\left[  ,\right]  \right)  $ be a Lie algebra
and$\ $let$\ \ \left(  \mathcal{G},\left[  ,...,\right]  \right)  $ be a
higher order Lie algebra. We will assume that the vector space $\mathcal{G}%
=V_{0}\oplus V_{1}$ is such that $V_{0}$ is a subalgebra of $\left(
\mathcal{G},\left[  ,\right]  \right)  $ and a submultialgebra of $\left(
\mathcal{G},\left[  ,...,\right]  \right)  $. \ From Ref. \cite{deazcarraga1}
it is known that, if $V_{0}$ is a subalgebra, then%
\begin{align}
\omega^{i_{p}}\left(  g,\lambda\right)   &  =\sum_{\alpha=0}^{\infty}%
\lambda^{\alpha}\omega^{i_{p},\alpha}\left(  g\right), \label{t3}\\
\omega^{i_{p},\alpha} &  =0\text{, \ \ for }\alpha<p\text{.}
\end{align}
Introducing (\ref{t3}) into the generalized Maurer-Cartan equations we find
that, when $V_{0}$ is a subalgebra, the generalized expanded Maurer-Cartan
equations are given by%
\begin{equation}
\tilde{d}_{m}\omega^{k_{s},\alpha_{s}}=\frac{1}{\left(  2m-2\right)
!}C_{\left(  i_{p_{1}}^{1},\beta_{p_{1}}^{1}\right)  ...\left(  i_{p_{2m-2}%
}^{2m-2},\beta_{p_{2m-2}}^{2m-2}\right)  }^{\left(  k_{s},\alpha_{s}\right)
}\omega^{i_{p_{1}}^{1},\beta_{p_{1}}^{1}}\wedge...\wedge\omega^{i_{p_{2m-2}%
}^{2m-2},\beta_{p_{2m-2}}^{2m-2}},\label{t5}%
\end{equation}
where
\begin{equation}
C_{\left(  i_{p_{1}}^{1},\beta_{p_{1}}^{1}\right)  ...\left(  i_{p_{2m-2}%
}^{2m-2},\beta_{p_{2m-2}}^{2m-2}\right)  }^{\left(  k_{s},\alpha_{s}\right)
}=C_{i_{p_{1}}^{1}...i_{p_{2m-2}}^{2m-2}}^{k_{s}}\delta_{\beta_{p_{1}}%
^{1}+...+\beta_{p_{2m-2}}^{2m-2}}^{\alpha_{s}}\text{, \ \ \ }\label{t6}%
\end{equation}%
\begin{equation}
\alpha_{s}=0,\cdot\cdot\cdot,N_{s}\text{; }\beta_{p_{l}}^{l}=0,\cdot\cdot
\cdot,\text{ \ }p_{l}=0,1;\label{t6_2}%
\end{equation}%
\begin{equation}
\text{ }\omega^{i_{p_{l}}^{l},\beta_{p_{l}}^{l}}=0\text{, for }\beta_{p_{l}%
}^{l}<p_{l}\text{; }l=1,...,2m-2.\text{\ \ }\label{t7}%
\end{equation}
The equations (\ref{t5})-(\ref{t7}) are a direct generalization to a higher
order Lie algebra case of equations $(3.13)$-$(3.14)$ of Ref.
\cite{deazcarraga1}. In the compact notation that uses the canonical
one-form $\theta^{\left(  N\right)  },$ the eq. (\ref{t5}) can be
written as
\begin{equation}
\tilde{d}_{m}\theta^{\left(  N\right)  }=\frac{1}{\left(  2m-2\right)
!}\left[  \theta^{\left(  N\right)  },\theta^{\left(  N\right)  }%
,\overset{2m-2}{\cdot\cdot\cdot\cdot},\theta^{\left(  N\right)  }\right],
\label{corr6}%
\end{equation}
where
\begin{equation}
\left[  \theta^{\left(  N\right)  },\theta^{\left(  N\right)  },\overset
{2m-2}{\cdot\cdot\cdot\cdot},\theta^{\left(  N\right)  }\right]
=\omega^{i_{p_{1}}^{1},\beta_{p_{1}}^{1}}\wedge\cdot\cdot\cdot\wedge
\omega^{i_{p_{2m-2}}^{2m-2},\beta_{p_{2m-2}}^{2m-2}}\left[  X_{i_{p_{1}}%
^{1},\beta_{p_{1}}^{1}},\cdot\cdot\cdot,X_{i_{p_{2m-2}}^{2m-2},\beta
_{p_{2m-2}}^{2m-2}}\right]  \text{.}\label{corr7}%
\end{equation}
Note that the equations (\ref{t6})-(\ref{t7}) store the structure subspace
information of $\mathcal{G}$ and therefore must be mentioned if we use this
free index notation.

The following theorem generalizes theorem $2$ of Ref. \cite{deazcarraga1} to
the case of higher order Lie algebras and it establishes the conditions under
which the $1$-forms $\omega^{i_{0},\alpha_{0}},$ $\omega^{i_{1},\alpha_{1}}$
generate new higher order Lie algebras:

\begin{theorem}
Let $\left(  \mathcal{G},\left[  ,...,\right]  \right)  $ be a higher order
Lie algebras with $\mathcal{G}=V_{0}\oplus V_{1},$ where $V_{0}$ is a
submultialgebra. Let the coordinates $g^{i_{p}}$ of $G$ be rescaled by
\ $g^{i_{0}}\longrightarrow g^{i_{0}}$, $g^{i_{1}}\longrightarrow\lambda
g^{i_{1}}$. Then, the coefficient one-forms $\left\{  \omega^{i_{0},\alpha
_{0}}\right\}  $, $\left\{  \omega^{i_{1},\alpha_{1}}\right\}  $ of the
expansions (\ref{t3}) of the Maurer-Cartan forms of $\mathcal{G}^{\ast}$
determine higher order Lie algebras $\left(  \mathcal{G},(N_{0},N_{1})\left[
,...,\right]  \right)  $ when $N_{1}=N_{0}$ or $N_{1}=N_{0}+1$ of dimension
$\dim\mathcal{G}\left(  N_{0},N_{1}\right)  =(N_{0}+1)\dim V_{0}+N_{1}\dim
V_{1} $ and with structure constants (\ref{t6}).
\end{theorem}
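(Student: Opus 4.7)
My plan is to follow the structure of the proof of Theorem 1, but to exploit the submultialgebra condition on $V_0$ to get the finer range $N_1\in\{N_0,N_0+1\}$. The starting data are the expanded Maurer-Cartan equations (\ref{t5})--(\ref{t7}): the Kronecker delta in (\ref{t6}) forces $\beta_{p_1}^{1}+\cdots+\beta_{p_{2m-2}}^{2m-2}=\alpha_s$, while (\ref{t7}) forces $\beta_{p_l}^{l}\ge p_l$. The submultialgebra hypothesis, in turn, says that $C^{k_1}_{i_0^1\cdots i_0^{2m-2}}=0$, i.e. a nonvanishing structure constant with upper index in $V_1$ must have at least one lower leg in $V_1$.

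First I would fix the candidate truncation: keep $\{\omega^{i_0,\alpha_0}\}_{\alpha_0=0,\ldots,N_0}$ and $\{\omega^{i_1,\alpha_1}\}_{\alpha_1=1,\ldots,N_1}$, and ask when the right-hand side of (\ref{t5}) only uses forms from this set. Given $\alpha_s\le N_s$ and $\sum_{l}\beta_{p_l}^{l}=\alpha_s$ with $\beta_{p_l}^{l}\ge p_l$, the maximal value of a single $\beta_{p_l}^{l}$ is $\alpha_s-\sum_{l'\ne l}p_{l'}$, reached when every other $\beta$ sits at its lower bound. I would then run a case analysis on $(s,p_l)$:
\begin{itemize}
\item[(i)] $s=0$, $p_l=0$: the bound gives $\beta_{p_l}^{l}\le\alpha_0\le N_0$, automatic.
\item[(ii)] $s=0$, $p_l=1$: the worst case (all other $p_{l'}=0$) gives $\beta_{p_l}^{l}\le\alpha_0\le N_0$, so one needs $N_0\le N_1$.
\item[(iii)] $s=1$, $p_l=1$: the bound gives $\beta_{p_l}^{l}\le\alpha_1\le N_1$, automatic.
\item[(iv)] $s=1$, $p_l=0$: here the submultialgebra condition is crucial, since $C^{k_1}_{\cdots}\ne 0$ forces at least one $p_{l'}=1$ with $l'\ne l$; hence $\sum_{l'\ne l}p_{l'}\ge 1$ and $\beta_{p_l}^{l}\le\alpha_1-1\le N_1-1$, so one needs $N_1-1\le N_0$.
\end{itemize}
Combining (ii) and (iv) yields $N_0\le N_1\le N_0+1$, which is exactly $N_1=N_0$ or $N_1=N_0+1$. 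Counting generators gives $\dim\mathcal{G}(N_0,N_1)=(N_0+1)\dim V_0+N_1\dim V_1$, as claimed.

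It remains to verify the generalized Jacobi identity for the expanded structure constants (\ref{t6}). I would compute $\tilde{d}_m^{2}\omega^{k_s,\alpha_s}$ exactly as in the proof of Theorem~1: apply $\tilde{d}_m$ to (\ref{t5}), use the graded Leibniz rule, and reorganize the resulting double sum. The products of two expanded $C$'s factor as (a product of two original $C$'s) times a product of two Kronecker deltas in $\beta$'s and $\gamma$'s; the delta constraints reduce the identity to the Jacobi identity for the original multialgebra $(\mathcal{G},[\cdot,\ldots,\cdot])$, which holds by hypothesis, and the truncation is consistent with the analysis above because every $\beta,\gamma$ appearing in $\tilde{d}_m^{2}\omega^{k_s,\alpha_s}$ already satisfies the range bounds derived in the closure step.

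The main obstacle is bookkeeping rather than anything conceptual: one has to track several indices simultaneously ($i_{p_l}^{l}$, $\beta_{p_l}^{l}$, and the ``sector'' labels $p_l\in\{0,1\}$) and pinpoint the precise place where the submultialgebra hypothesis enters, namely the sharpening in case (iv). If that step is mishandled one gets only $N_1\le N_0$ instead of $N_1\le N_0+1$, missing exactly the extra truncation allowed by $V_0$ being a submultialgebra. The rest of the argument parallels the proof of Theorem~1 verbatim.
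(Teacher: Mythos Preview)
Your proposal is correct and follows essentially the same approach as the paper: both arguments establish closure by bounding the maximal $\beta_{p_l}^l$ appearing in $\tilde d_m\omega^{k_s,\alpha_s}$ via the constraint $\sum_l\beta_{p_l}^l=\alpha_s$ together with $\beta_{p_l}^l\ge p_l$, and both use the submultialgebra condition $C^{k_1}_{i_0^1\cdots i_0^{2m-2}}=0$ in exactly your case~(iv) to sharpen the bound to $\beta_0^l\le\alpha_1-1$, yielding the pair of inequalities $N_1\ge N_0$ and $N_0\ge N_1-1$. The paper carries this out through explicit small-$\alpha$ cases and summary tables rather than your single consolidated bound $\beta_{p_l}^l\le\alpha_s-\sum_{l'\neq l}p_{l'}$, but the content is identical; your treatment of the Jacobi identity also mirrors the paper's (which in fact defers it to the computation already done in Theorem~1).
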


\begin{proof}
We must prove that the set
\begin{equation}
\left\{  \omega^{i_{0},\alpha_{0}},\omega^{i_{1},\alpha_{1}}\right\}
=\left\{  \omega^{i_{0},0},\omega^{i_{0},1},\cdot\cdot\cdot,\omega
^{i_{0},N_{0}};\omega^{i_{1},0},\omega^{i_{1},1},\cdot\cdot\cdot\cdot
,\omega^{i_{1},N_{1}}\right\}, \label{t8}%
\end{equation}
it is closed for the generalized Maurer-Cartan equations (\ref{t5}) and that
the Jacobi identity is satisfied. In fact, equation (\ref{t5}) can be written
as
\begin{equation}
\tilde{d}_{m}\omega^{k_{s},\alpha_{s}}=\frac{1}{\left(  2m-2\right)  !}%
\sum\limits_{\beta_{p_{1}}^{1},\cdot\cdot\cdot,\beta_{p_{2m-2}}^{2m-2}%
=0}^{\alpha_{s}}C_{i_{p_{1}}^{1}\cdot\cdot\cdot i_{p_{2m-2}}^{2m-2}}^{k_{s}%
}\delta_{\beta_{p_{1}}^{1}+\cdot\cdot\cdot+\beta_{p_{2m-2}}^{2m-2}}%
^{\alpha_{s}}\omega^{i_{p_{1}}^{1},\beta_{p_{1}}^{1}}\wedge\cdot\cdot
\cdot\wedge\omega^{i_{p_{2m-2}}^{2m-2},\beta_{p_{2m-2}}^{2m-2}}.\label{g12}%
\end{equation}
From (\ref{g12}) we have
\begin{align}
\tilde{d}_{m}\omega^{k_{0},\alpha_{0}}&=\frac{1}{\left(  2m-2\right)  !}%
\sum\limits_{\beta_{p_{1}}^{1},\cdot\cdot\cdot,\beta_{p_{2m-2}}^{2m-2}%
=0}^{\alpha_{0}}C_{i_{p_{1}}^{1}\cdot\cdot\cdot i_{p_{2m-2}}^{2m-2}}^{k_{0}%
}\delta_{\beta_{p_{1}}^{1}+\cdot\cdot\cdot+\beta_{p_{2m-2}}^{2m-2}}%
^{\alpha_{0}}\omega^{i_{p_{1}}^{1},\beta_{p_{1}}^{1}}\wedge\cdot\cdot
\cdot\wedge\omega^{i_{p_{2m-2}}^{2m-2},\beta_{p_{2m-2}}^{2m-2}},\label{g13} \\%
\tilde{d}_{m}\omega^{k_{1},\alpha_{1}}&=\frac{1}{\left(  2m-2\right)  !}%
\sum\limits_{\beta_{p_{1}}^{1},\cdot\cdot\cdot\cdot,\beta_{p_{2m-2}}^{2m-2}%
=0}^{\alpha_{1}}C_{i_{p_{1}}^{1}\cdot\cdot\cdot i_{p_{2m-2}}^{2m-2}}^{k_{1}%
}\delta_{\beta_{p_{1}}^{1}+\cdot\cdot\cdot+\beta_{p_{2m-2}}^{2m-2}}%
^{\alpha_{1}}\omega^{i_{p_{1}}^{1},\beta_{p_{1}}^{1}}\wedge\cdot\cdot
\cdot\wedge\omega^{i_{p_{2m-2}}^{2m-2},\beta_{p_{2m-2}}^{2m-2}}.\label{g14}%
\end{align}

We now consider the forms that contribute to $\tilde{d}_{m}\omega
^{k_{s},\alpha_{s}}$:

\begin{itemize}
\item[(a)] the case $\alpha=0,$%
\[
\tilde{d}_{m}\omega^{k_{0},0}=\frac{1}{\left(  2m-2\right)  !}\sum
\limits_{\beta_{p_{1}}^{1},\cdot\cdot\cdot\cdot,\beta_{p_{2m-2}}^{2m-2}=0}%
^{0}C_{i_{p_{1}}^{1}\cdot\cdot\cdot\cdot i_{p_{2m-2}}^{2m-2}}^{k_{0}}%
\delta_{\beta_{p_{1}}^{1}+\cdot\cdot\cdot\cdot+\beta_{p_{2m-2}}^{2m-2}}%
^{0}\bullet
\]%
\begin{equation}
\bullet\omega^{i_{p_{1}}^{1},\beta_{p_{1}}^{1}}\wedge\cdot\cdot\cdot
\wedge\omega^{i_{p_{2m-2}}^{2m-2},\beta_{p_{2m-2}}^{2m-2}},\label{g15}%
\end{equation}%
\begin{equation}
\tilde{d}_{m}\omega^{k_{0},0}=\frac{1}{\left(  2m-2\right)  !}C_{i_{p_{1}}%
^{1}\cdot\cdot\cdot\cdot i_{p_{2m-2}}^{2m-2}}^{k_{0}}\omega^{i_{p_{1}}^{1}%
,0}\wedge\cdot\cdot\cdot\wedge\omega^{i_{p_{2m-2}}^{2m-2},0}.\label{g16}%
\end{equation}
The condition $\omega^{i_{p},\alpha_{p}}=0$ for $\alpha_{p}<p$ implies that,
in the sum over $p_{l}$ of equation (\ref{g16}), only terms of the form
$\omega^{i_{0}^{1}}$ survive$.$ Therefore
\begin{equation}
\tilde{d}_{m}\omega^{k_{0},0}=\frac{1}{\left(  2m-2\right)  !}C_{i_{0}%
^{1}\cdot\cdot\cdot\cdot i_{0}^{2m-2}}^{k_{0}}\omega^{i_{0}^{1},0}\wedge
\cdot\cdot\cdot\wedge\omega^{i_{0}^{2m-2},0}.\label{g17}%
\end{equation}

\item[(b)] the case $\alpha=1$%
\[
\tilde{d}_{m}\omega^{k_{0},1}=\frac{1}{\left(  2m-2\right)  !}\sum
\limits_{\beta_{p_{1}}^{1},\cdot\cdot\cdot\cdot,\beta_{p_{2m-2}}^{2m-2}=0}%
^{1}C_{i_{p_{1}}^{1}\cdot\cdot\cdot\cdot i_{p_{2m-2}}^{2m-2}}^{k_{0}}%
\delta_{\beta_{p_{1}}^{1}+\cdot\cdot\cdot\cdot+\beta_{p_{2m-2}}^{2m-2}}%
^{1}\bullet
\]%
\begin{equation}
\bullet\omega^{i_{p_{1}}^{1},\beta_{p_{1}}^{1}}\wedge\cdot\cdot\cdot
\wedge\omega^{i_{p_{2m-2}}^{2m-2},\beta_{p_{2m-2}}^{2m-2}}.\label{g18}%
\end{equation}%
\[
\tilde{d}_{m}\omega^{k_{1},1}=\frac{1}{\left(  2m-2\right)  !}\sum
\limits_{\beta_{p_{1}}^{1},\cdot\cdot\cdot\cdot,\beta_{p_{2m-2}}^{2m-2}=0}%
^{1}C_{i_{p_{1}}^{1}\cdot\cdot\cdot\cdot i_{p_{2m-2}}^{2m-2}}^{k_{1}}%
\delta_{\beta_{p_{1}}^{1}+\cdot\cdot\cdot\cdot+\beta_{p_{2m-2}}^{2m-2}}%
^{1}\bullet
\]%
\begin{equation}
\bullet\omega^{i_{p_{1}}^{1},\beta_{p_{1}}^{1}}\wedge\cdot\cdot\cdot
\wedge\omega^{i_{p_{2m-2}}^{2m-2},\beta_{p_{2m-2}}^{2m-2}}.\label{g19}%
\end{equation}

\end{itemize}
\end{proof}

\begin{itemize}
\item[(c)] the case $\alpha\geq2$%
\[
\tilde{d}_{m}\omega^{k_{0},\alpha_{0}}=\frac{1}{\left(  2m-2\right)  !}%
\sum\limits_{\beta_{p_{1}}^{1},\cdot\cdot\cdot\cdot,\beta_{p_{2m-2}}^{2m-2}%
=0}^{\alpha_{0}}C_{i_{p_{1}}^{1}\cdot\cdot\cdot\cdot i_{p_{2m-2}}^{2m-2}%
}^{k_{0}}\delta_{\beta_{p_{1}}^{1}+\cdot\cdot\cdot\cdot+\beta_{p_{2m-2}%
}^{2m-2}}^{\alpha_{0}}\bullet
\]%
\begin{equation}
\bullet\omega^{i_{p_{1}}^{1},\beta_{p_{1}}^{1}}\wedge\cdot\cdot\cdot
\wedge\omega^{i_{p_{2m-2}}^{2m-2},\beta_{p_{2m-2}}^{2m-2}}\label{g20}%
\end{equation}%
\[
\tilde{d}_{m}\omega^{k_{1},\alpha_{1}}=\frac{1}{\left(  2m-2\right)  !}%
\sum\limits_{\beta_{p_{1}}^{1},\cdot\cdot\cdot\cdot,\beta_{p_{2m-2}}^{2m-2}%
=0}^{\alpha_{1}}C_{i_{p_{1}}^{1}\cdot\cdot\cdot\cdot i_{p_{2m-2}}^{2m-2}%
}^{k_{1}}\delta_{\beta_{p_{1}}^{1}+\cdot\cdot\cdot\cdot+\beta_{p_{2m-2}%
}^{2m-2}}^{\alpha_{1}}%
\]%
\begin{equation}
\bullet\omega^{i_{p_{1}}^{1},\beta_{p_{1}}^{1}}\wedge\cdot\cdot\cdot
\wedge\omega^{i_{p_{2m-2}}^{2m-2},\beta_{p_{2m-2}}^{2m-2}}.\
\end{equation}
\ \ \ 
\end{itemize}

\bigskip

Therefore:\textbf{\ }

(1) \ For $\tilde{d}_{m}\omega^{k_{0},\alpha_{0}}$, we have

\begin{itemize}
\item[(1a)] the forms $\omega^{i_{0}^{1},\beta_{0}^{1}}$ contribute up to the
order shown in the following table:%

\[%
\begin{tabular}
[c]{||c||c||}\hline\hline
& Maximum order of $\beta_{0}^{l}$\\\hline\hline
$\alpha_{0}=0$ & $\beta_{0}^{l}\leq0$\\\hline\hline
$\alpha_{0}=1$ & $\beta_{0}^{l}\leq1$\\\hline\hline
$\alpha_{0}\geq2$ & $\beta_{0}^{l}\leq\alpha_{0}$\\\hline\hline
\end{tabular}
\ \
\]

\item[(1b)] The forms $\omega^{i_{1}^{1},\beta_{1}^{1}}$ contribute up to the
order shown in the following table:%
\[%
\begin{tabular}
[c]{||c||c||}\hline\hline
& Maximum order of $\beta_{1}^{l}$\\\hline\hline
$\alpha_{0}=0$ & there is no contribution\\\hline\hline
$\alpha_{0}=1$ & $\beta_{1}^{l}\leq1$\\\hline\hline
$\alpha_{0}\geq2$ & $\beta_{1}^{l}\leq\alpha_{0}$\\\hline\hline
\end{tabular}
\ \
\]

\item[(2)] For $\tilde{d}_{m}\omega^{k_{1},\alpha_{1}}$we have

\item[(2a)] with respect to the contribution of the forms $\omega^{i_{0}%
^{l},\beta_{0}^{l}}$ we can say

\item[(2ai)] for $\alpha_{1}=1$ we have that the maximum order of $\beta
_{0}^{l}$ can be found by analyzing the equation (\ref{g19})%
\begin{equation*}
\tilde{d}_{m}\omega^{k_{1},1}=\frac{1}{\left(  2m-2\right)  !}\sum
\limits_{\beta_{p_{1}}^{1},\cdot\cdot\cdot\cdot,\beta_{p_{2m-2}}^{2m-2}=0}%
^{1}C_{i_{p_{1}}^{1}\cdot\cdot\cdot\cdot i_{p_{2m-2}}^{2m-2}}^{k_{1}}%
\delta_{\beta_{p_{1}}^{1}+\cdot\cdot\cdot\cdot+\beta_{p_{2m-2}}^{2m-2}}%
^{1}\bullet
\end{equation*}%
\begin{equation}
\bullet\omega^{i_{p_{1}}^{1},\beta_{p_{1}}^{1}}\wedge\cdot\cdot\cdot
\wedge\omega^{i_{p_{2m-2}}^{2m-2},\beta_{p_{2m-2}}^{2m-2}}.
\end{equation}

\end{itemize}

The condition of submultialgebra, $C_{i_{0}^{1}\cdot\cdot\cdot i_{0}^{2m-2}%
}^{k_{1}}=0$ implies that in the sums on $p_{l}=\left\{  p_{1},\cdot\cdot
\cdot,p_{2m-2}\right\}  $ at least one of them, we say $p_{x} $, must be equal
to $1$. \ So that $p_{l}=0$ for $l\neq x$. This means that the condition%
\begin{equation}
\beta_{p_{1}}^{1}+\cdot\cdot\cdot\beta_{p_{x}}^{x}\cdot\cdot\cdot
+\beta_{p_{2m-2}}^{2m-2}=1,\label{g22}%
\end{equation}
takes the form%
\begin{equation}
\beta_{0}^{1}+\cdot\cdot\cdot\beta_{0}^{x-1}+\beta_{1}^{x}+\beta_{0}%
^{x+1}+\cdot\cdot\cdot+\beta_{0}^{2m-2}=1.\label{g23}%
\end{equation}

Since $\omega^{i_{p_{l}}^{l},\beta_{p_{l}}^{l}}=0$ for $\beta_{p_{l}}%
^{l}<p_{l}$ we have that to generate a non vanishing element in (\ref{g19}),
it is necessary that in the form $\omega^{i_{p_{x}}^{x},\beta_{p_{x}}^{x}%
}=\omega^{i_{1}^{x},\beta_{1}^{x}}$ \ it must be fulfilled that $\beta_{1}%
^{x}=1.$ So $p_{l}=0$ and $\beta_{p_{l}}^{l}=\beta_{0}^{l}=0$ for $l\neq x.$
\ This means that the forms $\omega^{i_{0}^{l},\beta_{0}^{l}}$ contribute to
$\tilde{d}_{m}\omega^{k_{1},\alpha_{1}}$ for $\alpha_{1}=1$ only up to the
order $\beta_{0}^{l}=0=\alpha_{1}-1.$

\begin{itemize}
\item[(2aii)] Following the same previous procedure we find that, for
$\alpha_{1}\geq2,$ the forms $\omega^{i_{0}^{l},\beta_{0}^{l}}$ contribute up
to the order $\beta_{0}^{l}\leq\alpha_{1}-1.$
\end{itemize}

The contribution of the forms $\omega^{i_{0}^{l},\beta_{0}^{l}}$ to $\tilde
{d}_{m}\omega^{k_{1},\alpha_{1}}$ is shown in the following table
\[%
\begin{tabular}
[c]{||c||c||}\hline\hline
& Maximum order of $\beta_{0}^{l}$\\\hline\hline
$\alpha_{1}=1$ & $\beta_{0}^{l}\leq0=\alpha_{1}-1$\\\hline\hline
$\alpha_{1}\geq2$ & $\beta_{0}^{l}\leq\alpha_{1}-1$\\\hline\hline
\end{tabular}
\]

\begin{itemize}
\item[(2b)] The contribution of the forms $\omega^{i_{1}^{l},\beta_{1}^{l}} $
to $\tilde{d}_{m}\omega^{k_{1},\alpha_{1}}$ is given by:
\[%
\begin{tabular}
[c]{||c||c||}\hline\hline
& Maximum order of $\beta_{1}^{l}$\\\hline\hline
$\alpha_{1}=1$ & $\beta_{1}^{l}\leq1=\alpha_{1}$\\\hline\hline
$\alpha_{1}\geq2$ & $\beta_{1}^{l}\leq\alpha_{1}$\\\hline\hline
\end{tabular}
\ \
\]
In the following table are summarized the contributions of the forms
$\omega^{i_{p_{l}}^{l},\beta_{p_{l}}^{l}}$ to $\tilde{d}_{m}\omega
^{k_{s},\alpha_{s}}:$
\end{itemize}

\ \ \ \ \ \ \ \ \ \ \ \ \ \ \ \ \ \ \ \ \ \ \ \ \ \ \ \ \ \ \ \ \ \ \ \ \ \ \ \ \ \ \ \ \ \ \ \ \ \ \ \ \ \ \ \ \ \ \ \ \ \ \ \ \
\begin{tabular}
[c]{||c||c||c||}\hline\hline
$\alpha_{s}\geq s$ & $\omega^{i_{0}^{l},\beta_{0}^{l}}$ & $\omega^{i_{1}%
^{1},\beta_{1}^{1}}$\\\hline\hline
$\tilde{d}_{m}\omega^{k_{0},\alpha_{0}}$ & $\beta_{0}^{l}\leq\alpha_{0}$ &
$\beta_{0}^{l}\leq\alpha_{0}$\\\hline\hline
$\tilde{d}_{m}\omega^{k_{1},\alpha_{1}}$ & $\beta_{0}^{l}\leq\alpha_{1}-1$ &
$\beta_{1}^{l}\leq\alpha_{1}$\\\hline\hline
\end{tabular}
\ \ \ \ \ \ \ \ \ \ \ \ \ \ \ \ \ \ \ \ \ \ \ \ \ \ \ \ \ \ \ \ \ \ \ \ \ \ \ \ \ \ \ \ \ \ \ \ \ \ \ \ \ \ \ \ \ \ \ \ \ \ \ \ \ \ \ \ 

In order that the generalized Maurer-Cartan equations be satisfied, there must
exist in (\ref{t8}) sufficient $1$-forms, so that the ($N_{0}+1)$
$\omega^{k_{0},\alpha_{0}}$ and $N_{1}$ $\omega^{k_{1},\alpha_{1}}$ (\ref{t8})
must include at least those present in its differential. This means that the
above table implies the inverse iniqualities shown in the following table

\ \ \ \ \ \ \ \ \ \ \ \ \ \ \ \ \ \ \ \ \ \ \ \ \ \ \ \ \ \ \ \ \ \ \ \ \ \ \ \ \ \ \ \ \ \ \ \ \ \ \ \ \ \ \ \ \ \ \ \ \ \ \ \
\begin{tabular}
[c]{||c||c||c||}\hline\hline
$\alpha_{s}\geq s$ & $\omega^{i_{0}^{l},\beta_{0}^{l}}$ & $\omega^{i_{1}%
^{1},\beta_{1}^{1}}$\\\hline\hline
$\tilde{d}_{m}\omega^{k_{0},\alpha_{0}}$ & $N_{0}\geq N_{0}$ & $N_{1}\geq
N_{0}$\\\hline\hline
$\tilde{d}_{m}\omega^{k_{1},\alpha_{1}}$ & $N_{0}\geq N_{1}-1$ & $N_{1}\geq
N_{1}$\\\hline\hline
\end{tabular}

The corresponding solutions to the inequations
\begin{align}
N_{1}&\geq N_{0} ,\\
N_{0}&\geq N_{1}-1,
\end{align}
are
\begin{equation}
N_{1}=N_{0},\label{g24}%
\end{equation}
or%
\begin{equation}
N_{0}=N_{1}-1.\label{g25}%
\end{equation}

These equations show the two ways in which the (\ref{t3})\ \ expansions must
be truncated. \ \ \ \ \ \ 

\section{\textbf{Dual formulation of the higher-order Lie algebra S-expansion
procedure}}

In ref. \cite{sexpansion1} was constructed an S-expansion procedure which
permits obtaining a new higher-order Lie algebra from an original one by
choosing an Abelian semigroup $S$. In the previous sections of the present
work we have generalized the expansion procedure of ref. \cite{deazcarraga1}
to the higher-order Lie algebra case. \ \ 

The $S$-expansion procedure is defined as the action of a semigroup $S$ on the
generators $T_{A}$ of the algebra, and the power series expansion is carried
out on the MC forms of the original algebra. On the other hand, the
$S$-expansion is defined on the algebra $\mathfrak{g}$\ without referring to
the group manifold, whereas the power series expansion is based on a rescaling
of the group coordinates.

It is the purpose of this section to study the $S$-expansion procedure in the
context of the group manifold and then to find the dual formulation of such an
$S$-expansion procedure.

\subsection{\textbf{S-expansion of the \ higher-order Lie algebra}}

Let's remember that the $S$-expansion method is based on combining the
structure constants of $\ \left(  \mathcal{G},\left[  ,...,\right]  \right)  $
with the inner law of a semigroup $S$ to define the Lie bracket of a new,
$S$-expanded multialgebra.

Let $S=\left\{  \lambda_{\alpha}\right\}  $ be a finite Abelian semigroup
endowed with a commutative and associative composition law $S\times
S\rightarrow S,$ $\left(  \lambda_{\alpha},\lambda_{\beta}\right)
\mapsto\lambda_{\alpha}\lambda_{\beta}=K_{\alpha\beta}^{\text{ \ \ \ \ }%
\gamma}\lambda_{\gamma}.$ The direct product $\mathfrak{G}=S\otimes
\mathcal{G}$ is defined as the cartesian product set
\begin{equation}
\mathfrak{G}=S\times\mathcal{G}=\left\{  T_{\left(  A,\alpha\right)  }%
=\lambda_{\alpha}T_{A}\text{ : }\lambda_{\alpha}\in S\text{ , }T_{A}%
\in\mathcal{G}\right\}, \label{f0}%
\end{equation}
with the composition law $\left[  ,...,\right]  _{S}:\mathfrak{G}\overset
{n}{\times...\times}\mathfrak{G}\rightarrow\mathfrak{G}$, defined by%
\begin{align}
\left[  T_{\left(  A_{1},\alpha_{1}\right)  },...,T_{\left(  A_{n},\alpha
_{n}\right)  }\right]  _{S}&=\lambda_{\alpha_{1}}...\lambda_{\alpha_{n}}\left[
T_{A_{1}},...,T_{A_{n}}\right],
\\
\left[  T_{\left(  A_{1},\alpha_{1}\right)  },...,T_{\left(  A_{n},\alpha
_{n}\right)  }\right]  _{S}&=K_{\alpha_{1}...\alpha_{n}}^{\gamma}%
C_{A_{1}...A_{n}}^{C}\lambda_{\gamma}T_{C}=C_{\left(  A_{1},\alpha_{1}\right)
...\left(  A_{n},\alpha_{n}\right)  }^{\left(  C,\gamma\right)  }T_{\left(
C,\gamma\right)  },\label{f1}%
\end{align}
where $T_{\left(  A_{i},\alpha_{i}\right)  }\in\mathfrak{G}$, $\forall
i=1,...,n,$ and \ \ $C_{\left(  A_{1},\alpha_{1}\right)  ...\left(
A_{n},\alpha_{n}\right)  }^{\left(  C,\gamma\right)  }=K_{\alpha_{1}%
...\alpha_{n}}^{\gamma}C_{A_{1}...A_{n}}^{C}.$ \ \ \ \ 

\begin{theorem}
The set $G=S\times G$ (\ref{f0}) with the composition law (\ref{f1}) defines a
new Lie multialgebra which will be called S-expanded Lie multialgebra. This
algebra is a Lie algebra structure defined over the vector space obtained by
taking $S$ copies of $G$ by means of the structure constant $C_{\left(
A_{1},\alpha_{1}\right)  ...\left(  A_{n},\alpha_{n}\right)  }^{\left(
C,\gamma\right)  }=K_{\alpha_{1}...\alpha_{n}}^{\gamma}C_{A_{1}...A_{n}}^{C}$
where $K_{\alpha_{1}...\alpha_{n}}^{\gamma}=K_{\alpha_{1}...\alpha_{n-1}%
}^{\sigma}K_{\sigma\alpha_{n}}^{\gamma}$. \ The structure constants
$C_{\left(  A_{1},\alpha_{1}\right)  ...\left(  A_{n},\alpha_{n}\right)
}^{\left(  C,\gamma\right)  }$ defined in (\ref{f1}) inherit the symmetry
properties of $C_{A_{1}...A_{n}}^{C}$ of $G$ by virtue of the abelian
character of the $S$-product.
\end{theorem}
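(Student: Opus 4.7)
The plan is to verify that $(\mathfrak{G}, [\cdot,\ldots,\cdot]_{S})$ satisfies the three defining properties of a higher-order Lie algebra of order $n$: multilinearity, total antisymmetry, and the generalized Jacobi identity. First I would identify $\mathfrak{G}$ with the free vector space on the basis $\{T_{(A,\alpha)}\}$, so that the underlying vector space is literally $|S|$ copies of $\mathcal{G}$. Multilinearity of the bracket in each slot is immediate from the bilinearity of the semigroup product $\lambda_{\alpha}\lambda_{\beta}=K_{\alpha\beta}^{\gamma}\lambda_{\gamma}$ and the multilinearity of the original multibracket $[T_{A_{1}},\ldots,T_{A_{n}}]$.

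For the antisymmetry, the key ingredient is the abelian character of $S$: commutativity combined with the associativity relation $K_{\alpha_{1}\cdots\alpha_{n}}^{\gamma}=K_{\alpha_{1}\cdots\alpha_{n-1}}^{\sigma}K_{\sigma\alpha_{n}}^{\gamma}$ implies that $K_{\alpha_{1}\cdots\alpha_{n}}^{\gamma}$ is fully symmetric in its $n$ lower indices. Hence under any transposition of a pair $(A_{i},\alpha_{i})\leftrightarrow (A_{j},\alpha_{j})$ the $K$-factor is invariant, while $C_{A_{1}\cdots A_{n}}^{C}$ supplies the sign dictated by the antisymmetry of the original bracket. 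This shows at once that the expanded constants $C_{(A_{1},\alpha_{1})\cdots(A_{n},\alpha_{n})}^{(C,\gamma)}$ inherit exactly the symmetry properties of $C_{A_{1}\cdots A_{n}}^{C}$, as asserted in the statement.

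The substantive step is the generalized Jacobi identity for the expanded constants. Each factor splits as a product $K\cdot C$, and contraction on a dummy pair $(A_{1},\alpha_{1})$ contracts the two $K$-pieces on $\alpha_{1}$. A short calculation using associativity and commutativity of $S$ yields
\[
\sum_{\alpha_{1}} K_{\alpha_{1}\alpha_{2}\cdots\alpha_{n}}^{\gamma}\, K_{\beta_{1}\cdots\beta_{n}}^{\alpha_{1}} = K_{\beta_{1}\cdots\beta_{n}\,\alpha_{2}\cdots\alpha_{n}}^{\gamma},
\]
which is symmetric in all $2n-1$ remaining lower indices. This single $K$ therefore pulls out of the antisymmetrization over the $2n-1$ composite pairs, leaving the purely algebraic factor $C_{A_{1}[A_{2}\cdots A_{n}}^{C}\, C_{B_{1}\cdots B_{n}]}^{A_{1}}$, which vanishes by the generalized Jacobi identity for the original multialgebra $(\mathcal{G},[\cdot,\ldots,\cdot])$.

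The main obstacle is the bookkeeping around the antisymmetrization: one must check carefully that the antisymmetrization over $2n-1$ composite pairs factors cleanly as antisymmetrization over the algebra labels alone, which is precisely the content of the total symmetry of $K$ derived above; and one must verify that the contracted $K$ reassembles as the iterated-product $K$ of $\lambda_{\beta_{1}}\cdots\lambda_{\beta_{n}}\lambda_{\alpha_{2}}\cdots\lambda_{\alpha_{n}}$, which is an elementary consequence of associativity but must be made explicit. Once these combinatorial points are settled, the reduction to the original generalized Jacobi identity is essentially automatic and the theorem closes.
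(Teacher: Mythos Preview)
Your proposal is correct and substantially more detailed than what the paper itself offers: the paper's ``proof'' consists of a single sentence deferring the argument to ref.~\cite{sexpansion1}. Your outline---checking multilinearity, deducing total antisymmetry of $C_{(A_{1},\alpha_{1})\cdots(A_{n},\alpha_{n})}^{(C,\gamma)}$ from the full symmetry of the $n$-selector $K_{\alpha_{1}\cdots\alpha_{n}}^{\gamma}$ (which follows from commutativity and associativity of $S$), and then reducing the generalized Jacobi identity to that of $\mathcal{G}$ by factoring out the symmetric $K$-piece---is exactly the standard argument one would find in the cited reference, and your identification of the contraction identity $\sum_{\alpha_{1}} K_{\alpha_{1}\alpha_{2}\cdots\alpha_{n}}^{\gamma} K_{\beta_{1}\cdots\beta_{n}}^{\alpha_{1}} = K_{\beta_{1}\cdots\beta_{n}\alpha_{2}\cdots\alpha_{n}}^{\gamma}$ as the key combinatorial step is spot on.
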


\begin{proof}
The proof is direct and may be found in ref. \cite{sexpansion1}.
\end{proof}

\subsection{\textbf{Dual formulation of the S-expansion Procedure}}

The above theorem implies that, for every abelian semigroup $S$ and Lie
multialgebra $\mathfrak{g}$, the product $\mathfrak{G}=S\times\mathfrak{g}$ is
also a Lie multialgebra, with a Lie bracket given by eq.~(\ref{f1}). This in
turn means that it must be possible to look at this $S$-expanded Lie
multialgebra $\mathfrak{G}$\ from the dual point of view of the Maurer-Cartan
forms \cite{dualSexpansion}.

\begin{theorem}
If $S=\left\{  \lambda_{\alpha},\alpha=1,\ldots,N\right\}  $ is a finite
abelian semigroup and if $\omega^{A}$ are the Maurer-Cartan forms for a Lie
multialgebra $\mathfrak{g}$, then the Maurer-Cartan forms $\omega^{\left(
A,\alpha\right)  }$ associated with the $S$-expanded Lie multialgebra
$\mathfrak{G}=S\times\mathfrak{g}$ [cf. Theorem~1] are related to the
$\omega^{A}$ by
\begin{equation}
\omega^{A}=%
{\displaystyle\sum\limits_{\lambda_{\alpha}\in S}}
\lambda_{\alpha}\omega^{\left(  A,\alpha\right)  },\label{cinco}%
\end{equation}
and satisfy the generalized Maurer Cartan equations
\begin{equation}
\tilde{d}_{m}\omega^{\left(  A,\alpha\right)  }=\frac{1}{\left(  2m-2\right)
!}C_{\left(  B_{1},\beta_{1}\right)  \cdot\cdot\cdot\left(  B_{2m-2}%
,\beta_{2m-2}\right)  }%
^{\ \ \ \ \ \ \ \ \ \ \ \ \ \ \ \ \ \ \ \ \ \left(
A,\alpha\right)  }\omega^{\left(  B_{1},\beta_{1}\right)  }\cdot\cdot
\cdot\omega^{\left(  B_{2m-2},\beta_{2m-2}\right)  }.\label{seis}%
\end{equation}

\end{theorem}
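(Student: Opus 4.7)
The plan is to take the relation (\ref{cinco}) as the \emph{defining ansatz} for the new Maurer--Cartan forms $\omega^{(A,\alpha)}$ and to derive (\ref{seis}) by applying the higher-order derivation $\tilde d_{m}$ to both sides, then using the generalized Maurer--Cartan equations for $\mathfrak{g}$ together with the multiplication law of the semigroup $S$. This is the higher-order analogue of the procedure carried out for ordinary Lie algebras in Ref.~\cite{dualSexpansion}.

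Concretely, the steps I would follow are: (i) apply $\tilde d_{m}$ to (\ref{cinco}), treating the semigroup elements $\lambda_{\alpha}$ as constant formal coefficients, to obtain $\tilde d_{m}\omega^{A}=\sum_{\alpha}\lambda_{\alpha}\tilde d_{m}\omega^{(A,\alpha)}$; (ii) independently, substitute the ansatz (\ref{cinco}) into the original generalized Maurer--Cartan equations for $\mathfrak{g}$,
\begin{equation}
\tilde d_{m}\omega^{A}=\frac{1}{(2m-2)!}C^{A}_{B_{1}\cdots B_{2m-2}}\,\omega^{B_{1}}\wedge\cdots\wedge\omega^{B_{2m-2}},
\end{equation}
producing a $(2m-2)$-fold sum in which each term carries the product $\lambda_{\beta_{1}}\cdots\lambda_{\beta_{2m-2}}$; (iii) collapse these products using the iterated semigroup law $\lambda_{\beta_{1}}\cdots\lambda_{\beta_{2m-2}}=K^{\gamma}_{\beta_{1}\cdots\beta_{2m-2}}\lambda_{\gamma}$ (as defined in Theorem~3), rewriting the right-hand side as a single sum over $\gamma$ with a prefactor $\lambda_{\gamma}$; (iv) equate the two expressions for $\tilde d_{m}\omega^{A}$ and read off the coefficient of each $\lambda_{\alpha}$, which yields (\ref{seis}) with structure constants $C^{(A,\alpha)}_{(B_{1},\beta_{1})\cdots(B_{2m-2},\beta_{2m-2})}=K^{\alpha}_{\beta_{1}\cdots\beta_{2m-2}}C^{A}_{B_{1}\cdots B_{2m-2}}$, precisely those prescribed by Theorem~3.

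The main obstacle is step (iv): strictly speaking, the coefficient-matching requires us to treat the $\lambda_{\alpha}$ as linearly independent objects, even though they live in a semigroup. I would handle this by regarding $\omega^{A}$ as a form valued in the semigroup module spanned by $\{\lambda_{\alpha}\}_{\alpha=1}^{N}$, in which the $\lambda_{\alpha}$ serve as a basis, so that comparison of coefficients is legitimate; this is the standard interpretation in the $S$-expansion literature and it is what justifies (\ref{cinco}) as a \emph{decomposition} rather than a mere rewriting. As a consistency check, the generalized Jacobi identity for the $C^{(A,\alpha)}_{(B_{1},\beta_{1})\cdots(B_{2m-2},\beta_{2m-2})}$ need not be re-established here: Theorem~3 already guarantees that these structure constants define a genuine higher-order Lie multialgebra, so $\tilde d_{m}^{2}\omega^{(A,\alpha)}=0$ is automatic from the closure argument of the preceding subsection.
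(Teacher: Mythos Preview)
Your proposal is correct and follows essentially the same approach as the paper: substitute the ansatz (\ref{cinco}) into the generalized Maurer--Cartan equations for $\mathfrak{g}$, use the semigroup product $\lambda_{\beta_{1}}\cdots\lambda_{\beta_{2m-2}}=K^{\gamma}_{\beta_{1}\cdots\beta_{2m-2}}\lambda_{\gamma}$, and read off the coefficient of each $\lambda_{\alpha}$. In fact you are more explicit than the paper, which compresses your steps (i)--(iv) into a single line and does not comment on the linear-independence issue you flag in step~(iv); your remark that the $\lambda_{\alpha}$ are to be treated as a formal basis of the semigroup module is exactly the implicit assumption the paper (and the $S$-expansion literature) relies on.
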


\begin{proof}
Introducing eq. \ (\ref{cinco})\ into the generalized Maurer-Cartan equations%
\begin{equation}
\tilde{d}_{m}\omega^{A}=\frac{1}{\left(  2m-2\right)  !}C_{B_{1}\cdot
\cdot\cdot\cdot\cdot\cdot B_{2m-2}}%
^{\ \ \ \ \ \ \ \ \ \ \ \ \ \text{\ \ \ \ \ \ }A}\omega^{B_{1}}\cdot\cdot
\cdot\cdot\cdot\omega^{B_{2m-2}},\label{v4}%
\end{equation}
we obtain
\begin{equation}
\tilde{d}_{m}\omega^{\left(  A,\alpha\right)  }=\frac{1}{\left(  2m-2\right)
!}\sum_{\beta_{1},...,\beta_{2m-2}}C_{\left(  B_{1},\beta_{1}\right)
....\left(  B_{2m-2},\beta_{2m-2}\right)  }%
^{\ \ \ \ \ \ \ \ \ \ \ \ \ \ \ \ \ \ \ \ \ \ \ \ \left(
A,\alpha\right)  }\omega^{\left(  B_{1},\beta_{1}\right)  }...\omega^{\left(
B_{2m-2},\beta_{2m-2}\right)  },\label{v6}%
\end{equation}
where $\Omega_{\left(  B_{1},\beta_{1}\right)  ......\left(  B_{2m-2}%
,\beta_{2m-2}\right)  }^{\ \ \ \ \ \ \ \ \ \ \ \ \ \left(  A,\alpha\right)
}=\Omega_{B_{1}......B_{2m-2}}^{\ \ \ \ \ \ \ \ \ \ \ \ \ A}K_{\beta
_{1}...\beta_{2m-2}}^{\alpha}$. \ Using the sum convention, equation
(\ref{v6})\ can be written as%

\begin{equation}
\tilde{d}_{m}\omega^{\left(  A,\alpha\right)  }=\frac{1}{\left(  2m-2\right)
!}C_{\left(  B_{1},\beta_{1}\right)  ...\left(  B_{2m-2},\beta_{2m-2}\right)
}^{\ \ \ \ \ \ \ \ \ \ \ \ \ \left(  A,\alpha\right)  }\omega^{\left(
B_{1},\beta_{1}\right)  }...\omega^{\left(  B_{2m-2},\beta_{2m-2}\right)
}.\label{v7}%
\end{equation}
\ This concludes the proof.
\end{proof}%

In the compact notation that uses the canonical one-form 
$\theta^{\left(  N\right)  }\omega^{\left(  A,\alpha\right)  }X_{\left(
A,\alpha\right)  },$ the eq. (\ref{v7}) can be written as

\begin{equation}
\tilde{d}_{m}\theta^{\left(  S\right)  }=\frac{1}{\left(  2m-2\right)
!}\left[  \theta^{\left(  S\right)  },\theta^{\left(  S\right)  }%
,\overset{2m-2}{\cdot\cdot\cdot\cdot},\theta^{\left(  S\right)  }\right],
\end{equation}
where
\begin{equation}
\left[  \theta^{\left(  S\right)  },\theta^{\left(  S\right)  },\overset
{2m-2}{\cdot\cdot\cdot\cdot},\theta^{\left(  S\right)  }\right]
=\omega^{\left(  B_{1},\beta_{1}\right)  }\wedge\cdot\cdot\cdot\wedge
\omega^{\left(  B_{2m-2},\beta_{2m-2}\right)  }\left[  X_{\left(  B_{1}%
,\beta_{1}\right)  },\cdot\cdot\cdot,X_{\left(  B_{2m-2},\beta_{2m-2}\right)
}\right]  \text{.}%
\end{equation}

It is perhaps interesting to notice that the relation shown in
eq.~(\ref{cinco}) is analogous to the method of power series expansion
developed in Ref.~\cite{deazcarraga1} and in the above sections.

\subsection{$0_{S}$\textbf{-Reduction of }$S$\textbf{-expanded Lie Algebras}}

Now we present the dual formulation for the $0_{S}$-reduction of an
$S$-expanded Lie multialgebra $\mathfrak{G}$, formulated in the language of
the MC forms.

Let $S=\left\{  \lambda_{i},i=1,\ldots,N\right\}  \cup\left\{  \lambda
_{N+1}=0_{S}\right\}  $ be an abelian semigroup with zero. The expanded
Maurer-Cartan forms $\omega^{\left(  A,\alpha\right)  }$ are then given by
\begin{equation}
\omega^{A}=\sum_{i=1}^{N}\lambda_{i}\omega^{\left(  A,i\right)  }+0_{S}%
\tilde{\omega}^{A},\label{diecisiete}%
\end{equation}
where $\tilde{\omega}^{A}=\omega^{\left(  A,N+1\right)  }$. We shall show that
the Maurer Cartan forms $\omega^{\left(  A,i\right)  }$ by themselves (without
including $\tilde{\omega}^{A}$) are those of a Lie multialgebra-the $0_{S}%
$-reduced multialgebra $\mathfrak{G}_{R}$.

It can be shown~\cite{sexpansion1} that $C_{\left(  A_{1},i_{1}\right)
...\left(  A_{n},i_{n}\right)  }^{\left(  C,k\right)  }=K_{i_{1}\cdot
\cdot\cdot i_{n}}^{k}C_{A_{1}...A_{n}}^{C}$ are the structure constants for
the $0_{S}$-reduced $S$-expanded multialgebra $\mathfrak{G}_{R}$, which is
generated by $T_{(A,i)}$:%
\begin{equation}
\left[  T_{\left(  A_{1},i_{1}\right)  },...,T_{\left(  A_{n},i_{n}\right)
}\right]  _{S}=K_{i_{1}\cdot\cdot\cdot\cdot i_{n}}^{k}C_{A_{1}...A_{n}}%
^{C}T_{\left(  C,k\right)  }.\label{dieciocho}%
\end{equation}

The following Theorem~gives the equivalent statement in terms of Maurer-Cartan
forms (see \cite{dualSexpansion}):
\begin{theorem}
Let $S=\left\{  \lambda_{i},i=1,\ldots,N\right\}  \cup\left\{  \lambda
_{N+1}=0_{S}\right\}  $ be an abelian semigroup with zero and let $\left\{
\omega^{\left(  A,i\right)  },i=1,\ldots,N\right\}  \cup\left\{
\omega^{\left(  A,N+1\right)  }=\tilde{\omega}^{A}\right\}  $ be the MC forms
for the $S$-expanded multialgebra $\mathfrak{G}=S\times\mathfrak{g}$ of
$\mathfrak{g}$\ by the semigroup $S$. Then, $\left\{  \omega^{\left(
A,i\right)  },i=1,\ldots,N\right\}  $ are the Maurer-Cartan forms for the
$0_{S}$-reduced $S$-expanded multialgebra $\mathfrak{G}_{R}$.
\end{theorem}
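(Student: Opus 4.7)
The plan is straightforward: start from the $S$-expanded generalized Maurer--Cartan equations established in Theorem~4 and exploit the defining property of the zero element to prove that the subfamily $\{\omega^{(A,i)}\}_{i=1}^{N}$ closes under $\tilde{d}_m$, with differentials that match exactly the Maurer--Cartan equations one would write down for $\mathfrak{G}_R$ from its bracket (\ref{dieciocho}).

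First I would translate the identity $\lambda_{N+1}\lambda_\alpha = \lambda_{N+1}$ into a statement about the iterated selectors $K_{\beta_1\cdots\beta_{2m-2}}^{\gamma}$. Because any semigroup product involving $0_S$ yields $0_S$, the coefficient $K_{\beta_1\cdots\beta_{2m-2}}^{\gamma}$ can only be supported on $\gamma = N+1$ as soon as some $\beta_\ell = N+1$. Equivalently, for $i \in \{1,\ldots,N\}$,
\begin{equation}
K_{\beta_1\cdots\beta_{2m-2}}^{i} = 0 \quad \text{whenever some } \beta_\ell = N+1.
\end{equation}
This absorption property is the heart of the argument; the remainder is purely combinatorial.

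Next I would specialize (\ref{v7}) to $\alpha = i \le N$ and apply the previous identity. This collapses the sum to indices $\beta_\ell \in \{1,\ldots,N\}$ and gives
\begin{equation}
\tilde{d}_m \omega^{(A,i)} = \frac{1}{(2m-2)!} \sum_{j_1,\ldots,j_{2m-2}=1}^{N} K_{j_1\cdots j_{2m-2}}^{i}\, C_{B_1\cdots B_{2m-2}}^{A}\, \omega^{(B_1,j_1)}\wedge\cdots\wedge\omega^{(B_{2m-2},j_{2m-2})}.
\end{equation}
In particular, the forms $\tilde{\omega}^A = \omega^{(A,N+1)}$ never appear on the right-hand side, so the system closes on $\{\omega^{(A,i)}\}_{i=1}^{N}$.

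Finally I would identify the resulting equations with the dual of the reduced bracket (\ref{dieciocho}). Since the structure constants of $\mathfrak{G}_R$ are precisely $K_{i_1\cdots i_{2m-2}}^{k}C_{B_1\cdots B_{2m-2}}^{A}$ with all semigroup labels in $\{1,\ldots,N\}$, Theorem~4 applied directly to $\mathfrak{G}_R$ produces the same closed system. Nilpotency $\tilde{d}_m^2\omega^{(A,i)} = 0$, equivalent to the generalized Jacobi identity for $\mathfrak{G}_R$, is inherited from $\tilde{d}_m^2 = 0$ on $\mathfrak{G}$ by restricting to non-zero labels, since the cancellations already occur label-by-label. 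The only genuine obstacle is formulating the absorption property cleanly; once that is in place, the theorem reduces to a transcription of Theorem~4 with the semigroup range truncated to $\{1,\ldots,N\}$.
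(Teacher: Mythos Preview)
Your proposal is correct and rests on the same key fact as the paper's proof: the absorption property of $0_S$, which forces $K_{\beta_1\cdots\beta_{2m-2}}^{\,i}=0$ for $i\le N$ whenever any $\beta_\ell=N+1$, so that the $\tilde d_m$ of each $\omega^{(A,i)}$ closes on the truncated family. The only difference is organisational: the paper re-inserts the decomposition (\ref{diecisiete}) into the original generalized Maurer--Cartan equations and then matches coefficients of $\lambda_i$ versus $0_S$, whereas you start directly from the already-established $S$-expanded equations (\ref{v7}) and restrict the selector; your route is shorter because it avoids partially re-deriving Theorem~7, but the underlying argument is identical.
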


\begin{proof}
The Maurer-Cartan forms for the S-expanded multialgebra $\mathfrak{G}$ satisfy
the generalized Maurer Cartan equations $\left[  \text{cf. eq. (\ref{seis}%
)}\right]  $
\begin{equation}
\tilde{d}_{m}\omega^{\left(  A,\alpha\right)  }=\frac{1}{\left(  2m-2\right)
!}C_{\left(  B_{1},\beta_{1}\right)  \cdot\cdot\cdot\left(  B_{2m-2}%
,\beta_{2m-2}\right)  }%
^{\ \ \ \ \ \ \ \ \ \ \ \ \ \ \ \ \ \ \ \ \ \ \ \ \ \ \ \ \left(
A,\alpha\right)  }\omega^{\left(  B_{1},\beta_{1}\right)  }\cdot\cdot
\cdot\omega^{\left(  B_{2m-2},\beta_{2m-2}\right)  }.\label{diecinueve}%
\end{equation}
Introducing (\ref{diecisiete}) into (\ref{diecinueve}) we have%
\[
\sum_{i=1}^{N}\lambda_{i}\tilde{d}_{m}\omega^{\left(  A,i\right)  }%
+0_{S}\tilde{d}_{m}\omega^{\left(  A,N+1\right)  }%
\]%
\begin{equation}
=\frac{1}{\left(  2m-2\right)  !}C_{B_{1}....B_{2m-2}}%
^{\ \ \ \ \ \ \ \ \ \ \ \ \ A}\left[  \left(
\begin{array}
[c]{c}%
\sum_{j_{1}}^{N}\lambda_{j_{1}}\omega^{\left(  B_{1},j_{1}\right)  }\\
+0_{S}\omega^{\left(  B_{1},N+1\right)  }%
\end{array}
\right)  \times\cdot\cdot\cdot\times\left(
\begin{array}
[c]{c}%
\sum_{j_{2m-2}}^{N}\lambda_{j_{2m-2}}\omega^{\left(  B_{2m-2},j_{2m-2}\right)
}\\
+0_{S}\omega^{\left(  B_{2m-2},N+1\right)  }%
\end{array}
\right)  \right]  .
\end{equation}
On the other hand we can write%
\[
\sum_{\alpha=1}^{N+1}\lambda_{\alpha}\tilde{d}_{m}\omega^{\left(
A,\alpha\right)  }=\frac{1}{\left(  2m-2\right)  !}C_{B_{1}......B_{2m-2}}%
^{\ \ \ \ \ \ \ \ \ \ \ \ \ \text{\ \ \ \ \ \ }A}\left(  \sum_{\beta_{1}%
}^{N+1}\lambda_{\beta_{1}}\omega^{\left(  B_{1},\beta_{1}\right)  }\right)
...\left(  \sum_{\beta_{2m-2}}^{N+1}\lambda_{\beta_{2m-2}}\omega^{\left(
B_{2m-2},\beta_{2m-2}\right)  }\right),
\]%
\begin{equation}
=\sum_{\alpha}^{N+1}\lambda_{\alpha}\left(  \frac{1}{\left(  2m-2\right)
!}\sum_{\beta_{1},...,\beta_{2m-2}}^{N+1}C_{\left(  B_{1},\beta_{1}\right)
......\left(  B_{2m-2},\beta_{2m-2}\right)  }%
^{\ \ \ \ \ \ \ \ \ \ \ \ \ \text{\ \ \ \ \ \ \ \ \ \ \ \ \ \ \ \ \ \ \ \ \ \ \ \ \ \ }%
\left(  A,\alpha\right)  }\omega^{\left(  B_{1},\beta_{1}\right)  }%
...\omega^{\left(  B_{2m-2},\beta_{2m-2}\right)  }\right)  .
\end{equation}
Since
\begin{equation}
\sum_{\alpha}^{N+1}\lambda_{\alpha}\tilde{d}_{m}\omega^{\left(  A,\alpha
\right)  }=\sum_{i=1}^{N}\lambda_{i}\tilde{d}_{m}\omega^{\left(  A,i\right)
}+0_{S}\tilde{d}_{m}\omega^{\left(  A,N+1\right)  },%
\end{equation}%
we have
\[
\sum_{\alpha}^{N+1}\lambda_{\alpha}\left(  \frac{1}{\left(  2m-2\right)
!}\sum_{\beta_{1},...,\beta_{2m-2}}^{N+1}C_{\left(  B_{1},\beta_{1}\right)
......\left(  B_{2m-2},\beta_{2m-2}\right)  }%
^{\ \ \ \ \ \ \ \ \ \ \ \ \ \left(  A,\alpha\right)  }\omega^{\left(
B_{1},\beta_{1}\right)  }...\omega^{\left(  B_{2m-2},\beta_{2m-2}\right)
}\right)
\]%
\[
=\sum_{i}^{N}\lambda_{i}\left(  \frac{1}{\left(  2m-2\right)  !}\sum
_{i_{1},...,i_{2m-2}}^{N}C_{\left(  B_{1},i_{1}\right)  ......\left(
B_{2m-2},i_{2m-2}\right)  }%
^{\ \ \ \ \ \ \ \ \ \ \ \ \ \text{\ \ \ \ \ \ \ \ \ \ \ \ \ \ \ \ \ \ \ \ \ \ \ \ \ }%
\left(  A,i\right)  }\omega^{\left(  B_{1},i_{1}\right)  }...\omega^{\left(
B_{2m-2},i_{2m-2}\right)  }\right)
\]%
\begin{equation}
+0_{S}\left(  \frac{1}{\left(  2m-2\right)  !}\sum_{\beta_{1},...,\beta
_{2m-2}}^{N+1}C_{\left(  B_{1},\beta_{1}\right)  ......\left(  B_{2m-2}%
,\beta_{2m-2}\right)  }^{\ \ \ \ \ \ \ \ \ \ \ \ \ \left(  A,N+1\right)
}\omega^{\left(  B_{1},\beta_{1}\right)  }...\omega^{\left(  B_{2m-2}%
,\beta_{2m-2}\right)  }\right)  ,
\end{equation}
we have that the generalized Maurer-Cartan equations takes the form%
\[
\left(  \sum_{i=1}^{N}\lambda_{\alpha}\tilde{d}_{m}\omega^{\left(
A,\alpha\right)  }+0_{S}\tilde{d}_{m}\omega^{\left(  A,N+1\right)  }\right)
\]%
\[
=\sum_{i}^{N}\lambda_{i}\left(  \frac{1}{\left(  2m-2\right)  !}\sum
_{i_{1},...,i_{2m-2}}^{N}C_{\left(  B_{1},i_{1}\right)  ......\left(
B_{2m-2},i_{2m-2}\right)  }%
^{\ \ \ \ \ \ \ \ \ \ \ \ \ \ \ \ \ \ \ \ \ \ \ \ \ \ \ \ \ \ \ \ \ \ \ \left(
A,i\right)  }\omega^{\left(  B_{1},i_{1}\right)  }...\omega^{\left(
B_{2m-2},i_{2m-2}\right)  }\right)
\]%
\begin{equation}
+0_{S}\left(  \frac{1}{\left(  2m-2\right)  !}\sum_{\beta_{1},...,\beta
_{2m-2}}^{N+1}C_{\left(  B_{1},\beta_{1}\right)  ......\left(  B_{2m-2}%
,\beta_{2m-2}\right)  }^{\ \ \ \ \ \ \ \ \ \ \ \ \ \left(  A,N+1\right)
}\omega^{\left(  B_{1},\beta_{1}\right)  }...\omega^{\left(  B_{2m-2}%
,\beta_{2m-2}\right)  }\right)  .
\end{equation}
So that,%
\begin{equation}
\tilde{d}_{m}\omega^{\left(  A,i\right)  }=\frac{1}{\left(  2m-2\right)
!}\sum_{i_{1},...,i_{2m-2}}^{N}C_{\left(  B_{1},i_{1}\right)  ......\left(
B_{2m-2},i_{2m-2}\right)  }%
^{\ \ \ \ \ \ \ \ \ \ \ \ \ \ \ \ \ \ \ \ \ \ \ \ \ \ \ \ \ \ \ \ \ \ \ \ \left(
A,i\right)  }\omega^{\left(  B_{1},i_{1}\right)  }...\omega^{\left(
B_{2m-2},i_{2m-2}\right)  },%
\end{equation}%
\begin{equation}
\tilde{d}_{m}\omega^{\left(  A,N+1\right)  }=\frac{1}{\left(  2m-2\right)
!}\sum_{\beta_{1},...,\beta_{2m-2}}^{N+1}C_{\left(  B_{1},\beta_{1}\right)
......\left(  B_{2m-2},\beta_{2m-2}\right)  }%
^{\ \ \ \ \ \ \ \ \ \ \ \ \ \ \ \ \ \ \ \ \ \ \ \ \ \ \ \ \left(
A,N+1\right)  }\omega^{\left(  B_{1},\beta_{1}\right)  }...\omega^{\left(
B_{2m-2},\beta_{2m-2}\right)  }.
\end{equation}
Applying the so-called $0_{s}$-reduction we obtain%
\begin{equation}
\tilde{d}_{m}\omega^{\left(  A,i\right)  }=\frac{1}{\left(  2m-2\right)
!}\sum_{i_{1},...,i_{2m-2}}^{N}C_{\left(  B_{1},i_{1}\right)  ......\left(
B_{2m-2},i_{2m-2}\right)  }%
^{\ \ \ \ \ \ \ \ \ \ \ \ \ \ \ \ \ \ \ \ \ \ \ \ \left(
A,i\right)  }\omega^{\left(  B_{1},i_{1}\right)  }...\omega^{\left(
B_{2m-2},i_{2m-2}\right)  }.\label{tak}%
\end{equation}
This concludes the proof.
\end{proof}

In the compact notation that uses the canonical one-form 
$\theta^{\left(  S\right)  }=\omega^{\left(  A,i\right)  }X_{\left(
A,i\right)  }$, $i=1,...,N$, the eq. (\ref{tak}) can be written as
\begin{equation}
\tilde{d}_{m}\theta^{\left(  S\right)  }=\frac{1}{\left(  2m-2\right)
!}\left[  \theta^{\left(  S\right)  },\theta^{\left(  S\right)  }%
,\overset{2m-2}{\cdot\cdot\cdot\cdot},\theta^{\left(  S\right)  }\right]
\text{,}%
\end{equation}
\begin{equation}
\left[  \theta^{\left(  S\right)  },\theta^{\left(  S\right)  },\overset
{2m-2}{\cdot\cdot\cdot\cdot},\theta^{\left(  S\right)  }\right]
=\omega^{\left(  B_{1},i_{1}\right)  }\wedge\cdot\cdot\cdot\wedge
\omega^{\left(  B_{2m-2},i_{2m-2}\right)  }\left[  X_{\left(  B_{1}%
,i_{1}\right)  },\cdot\cdot\cdot,X_{\left(  B_{2m-2},i_{2m-2}\right)
}\right]  \text{.}%
\end{equation}

\subsection{Resonant submultialgebras}

From ref. \cite{sexpansion}, \cite{sexpansion1} we known that if
$\mathcal{G}=\oplus_{p\in I}V_{p}$ is a decompostion of $\mathcal{G}$\ \ into
subspaces $V_{p}$, with a structure described by the subsets $i_{\left(
p_{1},...,p_{n}\right)  }\subset I$ such that
\begin{equation}
\left[  V_{p_{1}},...,V_{p_{n}}\right]  \subset%
{\displaystyle\bigoplus\limits_{r\in i_{\left(  p_{1},...,p_{n}\right)  }}}
V_{r},\label{sm1}%
\end{equation}
and if $S=\cup_{p\in I}S_{p}$ is a subset decomposition of the Abelian
semigroup $S$ such that%
\begin{equation}
S_{p_{1}}\times S_{p_{2}}\times\cdot\cdot\cdot\cdot\times S_{p_{n}}\subset%
{\displaystyle\bigcap\limits_{r\in i_{\left(  p_{1},...,p_{n}\right)  }}}
S_{r},\label{sm2}%
\end{equation}
then we say that this decomposition is in resonance with the subspace
decomposition of $\mathcal{G}=\oplus_{p\in I}V_{p}$.

In the same refs. \cite{sexpansion}, \cite{sexpansion1} it was shown that if
$\mathcal{G}=\oplus_{p\in I}V_{p}$ is a decompostion of $\mathcal{G}$\ \ into
subspaces $V_{p}$ with a structure described by \ $\left[  V_{p_{1}%
},...,V_{p_{n}}\right]  \subset%
{\displaystyle\bigoplus\limits_{r\in i_{\left(  p_{1},...,p_{n}\right)  }}}
V_{r}$ and if $S=\cup_{p\in I}S_{p}$ is a subset decomposition of the Abelian
semigroup $S$ with the structure given by $S_{p_{1}}\times S_{p_{2}}%
\times\cdot\cdot\cdot\cdot\times S_{p_{n}}\subset%
{\displaystyle\bigcap\limits_{r\in i_{\left(  p_{1},...,p_{n}\right)  }}}
S_{r}$, then the algebra given by \ $\mathfrak{G}_{R}=%
{\displaystyle\bigoplus\limits_{p\in I}}
S_{p}\otimes V_{p}=%
{\displaystyle\bigoplus\limits_{p\in I}}
W_{p}$ is a subalgebra of the S-expanded multialgebra called a resonant submultialgebra.

If $\left\{  T_{a_{p}}\right\}  $ denote the basis of $V_{p}$ , $\lambda
_{\alpha_{q}}\in S_{q}$ and if $T_{\left(  a_{p},\alpha_{q}\right)  }%
=\lambda_{\alpha_{q}}T_{a_{p}}$ then we can write
\begin{equation}
\left[  T_{\left(  a_{p_{1}}^{1},\alpha_{q_{1}}^{1}\right)  },...,T_{\left(
a_{p_{n}}^{n},\alpha_{q_{n}}^{n}\right)  }\right]  _{S}=C_{\left(  a_{p_{1}%
}^{1},\alpha_{q_{1}}^{1}\right)  ...\left(  a_{p_{n}}^{n},\alpha_{q_{n}}%
^{n}\right)  }^{\ \ \ \ \ \ \ \ \ \ \ \ \ \ \ \ \ \ \ \ \ \ \ \ \ \ \left(
c_{s},\gamma_{r}\right)  }T_{\left(  c_{s},\gamma_{r}\right)  }\text{,}%
\label{sm3}%
\end{equation}
which means that the structure constants of the resonant submultialgebra are
given by%

\begin{equation}
C_{\left(  a_{p_{1}}^{1},\alpha_{p_{1}}^{1}\right)  ...\left(  a_{p_{n}}%
^{n},\alpha_{p_{n}}^{n}\right)  }%
^{\ \ \ \ \ \ \ \ \ \ \ \ \ \ \ \ \ \ \ \ \ \ \ \ \ \ \left(  c_{r},\gamma
_{r}\right)  }=K_{\alpha_{p_{1}}^{1}...\alpha_{p_{n}}^{n}}%
^{\ \ \ \ \ \ \ \ \ \ \gamma_{r}}C_{a_{p_{1}}^{1}...a_{p_{n}}^{n}%
}^{\ \ \ \ \ \ \ \ \ \ \ c_{r}}\text{.}%
\end{equation}

The following theorem provides the Maurer-Cartan equations for the
resonant submultialgebra:
\begin{theorem}
Let $\left\{  \omega^{a_{p}}\right\}  $ be a basis of $V_{p}^{\ast}$ and let
$\lambda_{\alpha_{q}}\in S_{q}$. Then%
\begin{equation}
\omega^{a_{p}}=\sum_{\lambda_{\alpha_{p}}\in S_{p}}\lambda_{\alpha_{p}}%
\omega^{\left(  a_{p},\alpha_{p}\right)  },\label{sm9}%
\end{equation}
and the Maurer-Cartan equations for the resonant submultialgebra of the
$S$-expanded multialgebra are given by $S$%
\begin{equation}
\tilde{d}_{m}\omega^{\left(  c_{r},\gamma_{r}\right)  }=\frac{1}{\left(
2m-2\right)  !}C_{\left(  a_{p_{1}}^{1},\alpha_{p_{1}}^{1}\right)  ...\left(
a_{p_{2m-2}}^{2m-2},\alpha_{p_{2m-2}}^{2m-2}\right)  }%
^{\ \ \ \ \ \ \ \ \ \ \ \ \ \left(  c_{r},\gamma_{r}\right)  }\omega^{\left(
a_{p_{1}}^{1},\alpha_{p_{1}}^{1}\right)  }...\omega^{\left(  a_{p_{2m-2}%
}^{2m-2},\alpha_{p_{2m-2}}^{2m-2}\right)  },\label{sm7}%
\end{equation}
where
\begin{equation}
C_{\left(  a_{p_{1}}^{1},\alpha_{p_{1}}^{1}\right)  ...\left(  a_{p_{2m-2}%
}^{2m-2},\alpha_{p_{2m-2}}^{2m-2}\right)  }^{\ \ \ \ \ \ \ \ \ \ \ \ \ \left(
c_{s},\gamma_{s}\right)  }=K_{\alpha_{p_{1}}^{1}...\alpha_{p_{2m-2}}^{2m-2}%
}^{\gamma_{s}}C_{a_{p_{1}}^{1}...a_{p_{2m-2}}^{2m-2}}^{c_{s}}\ \text{,}%
\ \ \text{with }r,p_{i}\in I\text{.}%
\end{equation}

\end{theorem}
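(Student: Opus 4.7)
The plan is to mimic the structure of the proof of Theorem 5 (the $0_S$-reduction case), but now keeping track of two layered decompositions at once: the subspace decomposition $\mathcal{G}=\oplus_{p\in I} V_p$ on the algebra side, and the subset decomposition $S=\cup_{p\in I} S_p$ on the semigroup side, tied together by the resonance condition (\ref{sm2}). The starting point is the generalized Maurer-Cartan equation of the original multialgebra, written in a form adapted to (\ref{sm1}):
\begin{equation*}
\tilde{d}_{m}\omega^{c_{r}}=\frac{1}{(2m-2)!}\,C_{a_{p_{1}}^{1}\cdots a_{p_{2m-2}}^{2m-2}}^{c_{r}}\,\omega^{a_{p_{1}}^{1}}\wedge\cdots\wedge\omega^{a_{p_{2m-2}}^{2m-2}},
\end{equation*}
where the sum over $r$ is restricted to $r\in i_{(p_{1},\ldots,p_{2m-2})}$ because of the subspace structure of the multibracket.

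Next, I would substitute the expansion (\ref{sm9}) into both sides. The left-hand side becomes $\sum_{\lambda_{\gamma_{r}}\in S_{r}}\lambda_{\gamma_{r}}\,\tilde{d}_{m}\omega^{(c_{r},\gamma_{r})}$, while on the right-hand side the product of Maurer-Cartan forms produces a product of semigroup elements $\lambda_{\alpha_{p_{1}}^{1}}\cdots\lambda_{\alpha_{p_{2m-2}}^{2m-2}}=K_{\alpha_{p_{1}}^{1}\cdots\alpha_{p_{2m-2}}^{2m-2}}^{\gamma}\lambda_{\gamma}$, where each $\lambda_{\alpha_{p_{l}}^{l}}$ is constrained to lie in $S_{p_{l}}$. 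The crucial step is then to invoke the resonance condition (\ref{sm2}): since the resulting $\lambda_{\gamma}$ automatically belongs to $\bigcap_{r\in i_{(p_{1},\ldots,p_{2m-2})}}S_{r}$, and since the sum over $c_{r}$ is already restricted to those same $r$'s, we are guaranteed that $\lambda_{\gamma}\in S_{r}$ matches the summation range $\lambda_{\gamma_{r}}\in S_{r}$ of the left-hand side. This is exactly the closure property that keeps us inside the resonant submultialgebra $\mathfrak{G}_{R}=\oplus_{p\in I}S_{p}\otimes V_{p}$. Comparing coefficients of each $\lambda_{\gamma_{r}}$ (using independence of the semigroup elements as they function in the expansion, as was done implicitly in all previous theorems) then extracts (\ref{sm7}), with the structure constants $K_{\alpha_{p_{1}}^{1}\cdots\alpha_{p_{2m-2}}^{2m-2}}^{\gamma_{s}}C_{a_{p_{1}}^{1}\cdots a_{p_{2m-2}}^{2m-2}}^{c_{s}}$ matching the generator-side bracket (\ref{sm3}).

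The main obstacle is not conceptual but notational: one must carefully track the double index structure $(a_{p_{l}}^{l},\alpha_{p_{l}}^{l})$ and ensure that the simultaneous constraints $a_{p_{l}}^{l}\in V_{p_{l}}$ and $\lambda_{\alpha_{p_{l}}^{l}}\in S_{p_{l}}$ are consistently applied on every factor of the multibracket. The generalized Jacobi identity for $\mathfrak{G}_{R}$ then follows automatically from that of $\mathfrak{G}$ by the same computation of $\tilde{d}_{m}^{2}\omega^{(c_{r},\gamma_{r})}=0$ used in Theorem 1, restricted to the resonant indices; no new cohomological input is required, because resonance preserves the closure of the set of allowed $(a_{p},\alpha_{p})$ labels under the multibracket.
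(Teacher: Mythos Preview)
Your proposal is correct and follows essentially the same route as the paper's proof: substitute the expansion (\ref{sm9}) into the generalized Maurer--Cartan equations, use the resonance condition (\ref{sm2}) to see that the semigroup product $\lambda_{\alpha_{p_{1}}^{1}}\cdots\lambda_{\alpha_{p_{2m-2}}^{2m-2}}$ lands in $\bigcap_{t\in i_{(p_{1},\ldots,p_{2m-2})}}S_{t}\subseteq S_{s}$ precisely when $C_{a_{p_{1}}^{1}\cdots a_{p_{2m-2}}^{2m-2}}^{c_{s}}\neq 0$, and then compare coefficients of each $\lambda_{\gamma_{s}}$. Your remark on the Jacobi identity is an extra observation not included in the paper's proof, but it is harmless and correct.
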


\begin{proof}
\bigskip\ The generalized Maurer-Cartan equations are given by
\begin{equation}
\tilde{d}_{m}\omega^{A}\left(  g,\lambda\right)  =\frac{1}{\left(
2m-2\right)  !}C_{B_{1}...B_{2m-2}}^{\ \ \ \ \ \ \ \ \ \ \ \ \ A}\omega
^{B_{1}}\left(  g,\lambda\right)  ...\omega^{B_{2m-2}}\left(  g,\lambda
\right)  .
\end{equation}
Introducing%
\begin{equation}
\omega^{a_{p}}=\sum_{\lambda_{\alpha_{p}}\in S_{p}}\lambda_{\alpha_{p}}%
\omega^{\left(  a_{p},\alpha_{p}\right)  },%
\end{equation}
into the generalized Maurer-Cartan equations, we have \ \
\begin{align}
&  \sum_{\lambda_{\gamma_{s}}\in S_{s}}\lambda_{\gamma_{s}}\tilde{d}_{m}%
\omega^{\left(  c_{s},\gamma_{s}\right)  } \nonumber \label{ap1}\\
&  =\frac{1}{\left(  2m-2\right)  !}C_{a_{p_{1}}^{1}...a_{p_{2m-2}}^{2m-2}%
}^{\ \ \ \ \ \ \ \ \ \ \ \ \ c_{s}}\left(  \sum_{\alpha_{p_{1}}^{1}}%
\lambda_{\alpha_{p_{1}}^{1}}\omega^{\left(  a_{p_{1}}^{1},\alpha_{p_{1}}%
^{1}\right)  }\right)  ...\left(  \sum_{\alpha_{p_{2m-2}}^{2m-2}}%
\lambda_{\alpha_{p_{2m-2}}^{2m-2}}\omega^{\left(  a_{p_{2m-2}}^{2m-2}%
,\alpha_{p_{2m-2}}^{2m-2}\right)  }\right) \nonumber\\
&  =\frac{1}{\left(  2m-2\right)  !}C_{a_{p_{1}}^{1}...a_{p_{2m-2}}^{2m-2}%
}^{\ \ \ \ \ \ \ \ \ \ \ \ \ c_{s}}\sum_{\alpha_{p_{1}}^{1},...,\alpha
_{p_{2m-2}}^{2m-2}}\lambda_{\alpha_{p_{1}}^{1}}...\lambda_{\alpha_{p_{2m-2}%
}^{2m-2}}\omega^{\left(  a_{p_{1}}^{1},\alpha_{p_{1}}^{1}\right)  }%
...\omega^{\left(  a_{p_{2m-2}}^{2m-2},\alpha_{p_{2m-2}}^{2m-2}\right)
}.
\end{align}
From the generalized resonance condition, we have
\begin{equation}
\lambda_{\alpha_{p_{1}}^{1}}...\lambda_{\alpha_{p_{2m-2}}^{2m-2}}%
=K_{\alpha_{p_{1}}^{1}...\alpha_{p_{2m-2}}^{2m-2}}^{\tilde{\gamma}}%
\lambda_{\tilde{\gamma}}\text{ donde }\lambda_{\tilde{\gamma}}\in\tilde
{S}\left(  p_{1},...,p_{2m-2}\right)  =%
{\displaystyle\bigcap\limits_{t\in i\left(  p_{1},...,p_{2m-2}\right)  }}
S_{t}.
\end{equation}
Since the condition $C_{a_{p_{1}}^{1}...a_{p_{2m-2}}^{2m-2}}%
^{\ \ \ \ \ \ \ \ \ \ \ \ \ c_{s}}\neq0$ implies $s\in i_{\left(
p_{1},...,p_{2m-2}\right)  } $, we have%
\begin{equation}
\tilde{S}\left(  p_{1},...,p_{2m-2}\right)  =%
{\displaystyle\bigcap\limits_{t\in i_{\left(  p_{1},...,p_{2m-2}\right)  }}}
S_{t}\subseteq S_{s}.
\end{equation}
This means that if $\tilde{S}\subseteq S_{s}$ then we can write%
\begin{equation}
\lambda_{\alpha_{p_{1}}^{1}}...\lambda_{\alpha_{p_{2m-2}}^{2m-2}}%
=K_{\alpha_{p_{1}}^{1}...\alpha_{p_{2m-2}}^{2m-2}}^{\gamma_{s}}\lambda
_{\gamma_{s}}\text{ where }\lambda_{\gamma_{s}}\in S_{s}.
\end{equation}
Introducing these results into (\ref{ap1}) we have%

\begin{align}
&  \sum_{\lambda_{\gamma_{s}}\in S_{s}}\lambda_{\gamma_{s}}\tilde{d}_{m}%
\omega^{\left(  c_{s},\gamma_{s}\right)  } \nonumber \\
&  =\frac{1}{\left(  2m-2\right)  !}C_{a_{p_{1}}^{1}...a_{p_{2m-2}}^{2m-2}%
}^{\ \ \ \ \ \ \ \ \ \ \ \ \ c_{s}}\sum_{\alpha_{p_{1}}^{1},...,\alpha
_{p_{2m-2}}^{2m-2}}\sum_{\lambda_{\gamma_{s}}\in S_{s}}K_{\alpha_{p_{1}}%
^{1}...\alpha_{p_{2m-2}}^{2m-2}}^{\gamma_{s}}\lambda_{\gamma_{s}}%
\omega^{\left(  a_{p_{1}}^{1},\alpha_{p_{1}}^{1}\right)  }...\omega^{\left(
a_{p_{2m-2}}^{2m-2},\alpha_{p_{2m-2}}^{2m-2}\right)  }. \nonumber
\end{align}%
\begin{align*}
&  \sum_{\lambda_{\gamma_{s}}\in S_{s}}\lambda_{\gamma_{s}}\tilde{d}_{m}%
\omega^{\left(  c_{s},\gamma_{s}\right)  }\\
&  =\frac{1}{\left(  2m-2\right)  !}\sum_{\lambda_{\gamma_{s}}\in S_{s}%
}\lambda_{\gamma_{s}}\sum_{\alpha_{p_{1}}^{1},...,\alpha_{p_{2m-2}}^{2m-2}%
}C_{a_{p_{1}}^{1}...a_{p_{2m-2}}^{2m-2}}^{\ \ \ \ \ \ \ \ \ \ \ \ \ c_{s}%
}K_{\alpha_{p_{1}}^{1}...\alpha_{p_{2m-2}}^{2m-2}}^{\gamma_{s}}\omega^{\left(
a_{p_{1}}^{1},\alpha_{p_{1}}^{1}\right)  }...\omega^{\left(  a_{p_{2m-2}%
}^{2m-2},\alpha_{p_{2m-2}}^{2m-2}\right)  }.
\end{align*}%
\begin{align*}
&  \sum_{\lambda_{\gamma_{s}}\in S_{s}}\lambda_{\gamma_{s}}\tilde{d}_{m}%
\omega^{\left(  c_{s},\gamma_{s}\right)  }\\
&  =\sum_{\lambda_{\gamma_{s}}\in S_{s}}\lambda_{\gamma_{s}}\left(  \frac
{1}{\left(  2m-2\right)  !}\sum_{\alpha_{p_{1}}^{1},...,\alpha_{p_{2m-2}%
}^{2m-2}}C_{\left(  a_{p_{1}}^{1},\alpha_{p_{1}}^{1}\right)  ...\left(
a_{p_{2m-2}}^{2m-2},\alpha_{p_{2m-2}}^{2m-2}\right)  }%
^{\ \ \ \ \ \ \ \ \ \ \ \ \ \left(  c_{s},\gamma_{s}\right)  }\omega^{\left(
a_{p_{1}}^{1},\alpha_{p_{1}}^{1}\right)  }...\omega^{\left(  a_{p_{2m-2}%
}^{2m-2},\alpha_{p_{2m-2}}^{2m-2}\right)  }\right)  .
\end{align*}
Therefore the generalized Maurer-Cartan equations for the resonant
submultialgebra are given by
\begin{equation}
\tilde{d}_{m}\omega^{\left(  c_{r},\gamma_{r}\right)  }=\frac{1}{\left(
2m-2\right)  !}\sum_{\alpha_{p_{1}}^{1},...,\alpha_{p_{2m-2}}^{2m-2}%
}C_{\left(  a_{p_{1}}^{1},\alpha_{p_{1}}^{1}\right)  ...\left(  a_{p_{2m-2}%
}^{2m-2},\alpha_{p_{2m-2}}^{2m-2}\right)  }^{\ \ \ \ \ \ \ \ \ \ \ \ \ \left(
c_{s},\gamma_{s}\right)  }\omega^{\left(  a_{p_{1}}^{1},\alpha_{p_{1}}%
^{1}\right)  }...\omega^{\left(  a_{p_{2m-2}}^{2m-2},\alpha_{p_{2m-2}}%
^{2m-2}\right)  },%
\end{equation}
which it can written in the form,%
\begin{equation}
\tilde{d}_{m}\omega^{\left(  c_{r},\gamma_{r}\right)  }=\frac{1}{\left(
2m-2\right)  !}C_{\left(  a_{p_{1}}^{1},\alpha_{p_{1}}^{1}\right)  ...\left(
a_{p_{2m-2}}^{2m-2},\alpha_{p_{2m-2}}^{2m-2}\right)  }%
^{\ \ \ \ \ \ \ \ \ \ \ \ \ \ \ \ \ \ \ \ \ \ \ \ \ \ \ \ \ \ \ \ \ \ \ \ \ \left(
c_{r},\gamma_{r}\right)  }\omega^{\left(  a_{p_{1}}^{1},\alpha_{p_{1}}%
^{1}\right)  }...\omega^{\left(  a_{p_{2m-2}}^{2m-2},\alpha_{p_{2m-2}}%
^{2m-2}\right)  }.
\end{equation}

\end{proof}

In the compact notation we have
\begin{equation}
\tilde{d}_{m}\theta^{\left(  S\right)  }=\frac{1}{\left(  2m-2\right)
!}\left[  \theta^{\left(  S\right)  },\theta^{\left(  S\right)  }%
,\overset{2m-2}{\cdot\cdot\cdot\cdot},\theta^{\left(  S\right)  }\right]
\text{,}%
\end{equation}
where
\begin{align}
\mathbf{\ }\theta^{\left(  S\right)  }&=\omega^{\left(  c_{r},\gamma
_{r}\right)  }X_{\left(  c_{r},\gamma_{r}\right)  } \\
\left[  \theta^{\left(  S\right)  },\theta^{\left(  S\right)  },\overset
{2m-2}{\cdot\cdot\cdot\cdot},\theta^{\left(  S\right)  }\right]
&=\omega^{\left(  a_{p_{1}}^{1},\alpha_{p_{1}}^{1}\right)  }\wedge\cdot
\cdot\cdot\wedge\omega^{\left(  a_{p_{2m-2}}^{2m-2},\alpha_{p_{2m-2}}%
^{2m-2}\right)  }\left[  X_{\left(  a_{p_{1}}^{1},\alpha_{p_{1}}^{1}\right)
},\cdot\cdot\cdot,X_{\left(  a_{p_{2m-2}}^{2m-2},\alpha_{p_{2m-2}}%
^{2m-2}\right)  }\right]  \text{.}%
\end{align}

\subsection{Reduced Multialgebras of a Resonant Submultialgebra}

In ref. \cite{sexpansion1} was shown that, if $S_{p}=\hat{S}_{p}\cup\check
{S}_{p}$ is a partition of the subsets $S_{p}\subset S$ that satisfy%
\begin{equation}
\check{S}_{p_{i}}\cap\hat{S}_{p_{i}}=\phi,\label{rrm1}%
\end{equation}
then%
\begin{equation}
\hat{S}_{p_{1}}\times\check{S}_{p_{2}}\times...\times\check{S}_{p_{n}}\subset%
{\displaystyle\bigcap\limits_{r\in i_{\left(  p_{1},...,p_{n}\right)  }}}
\hat{S}_{r}.\label{rrm2}%
\end{equation}
The conditions (\ref{rrm1}) and (\ref{rrm2}) induce the decomposition
$\mathfrak{G}_{R}=\mathfrak{\check{G}}_{R}\oplus\overset{\wedge}{\mathfrak{G}%
}_{R}$ on the resonant subalgebra, where
\begin{equation}
\mathfrak{\check{G}}_{R}=\oplus_{p\in I}\check{S}_{p}\otimes V_{p},\label{rrm3}%
\end{equation}%
\begin{equation}
\overset{\wedge}{\mathfrak{G}}_{R}=\oplus_{p\in I}\hat{S}_{p}\otimes
V_{p}.\label{rrm4}%
\end{equation}
When conditions (\ref{rrm1}) and (\ref{rrm2}) hold, then%
\begin{equation}
\left[  \overset{\wedge}{\mathfrak{G}}_{R},\mathfrak{\check{G}}_{R}%
,...,\mathfrak{\check{G}}_{R}\right]  _{S}\subset\overset{\wedge}%
{\mathfrak{G}}_{R},\label{rrm5}%
\end{equation}
and therefore $\left\vert \mathfrak{\check{G}}_{R}\right\vert $\ corresponds
to a reduced algebra of $\ \mathfrak{G}_{R}$.

The following theorem provides necessary conditions under which a reduced
multialgebra can be extracted from a resonant submultialgebra:

\begin{theorem}
\bigskip If $S_{p}=\hat{S}_{p}\cup\check{S}_{p}$ is a partition of the subsets
$S_{p}\subset S$ that satisfy%
\begin{equation}
\check{S}_{p_{i}}\cap\hat{S}_{p_{i}}=\phi,
\end{equation}%
\begin{equation}
\hat{S}_{p_{1}}\times\check{S}_{p_{2}}\times...\times\check{S}_{p_{n}}\subset%
{\displaystyle\bigcap\limits_{r\in i_{\left(  p_{1},...,p_{n}\right)  }}}
\hat{S}_{r},%
\end{equation}
then the generalized Maurer-Cartan equations for the Reduced Multialgebras of
a Resonant Submultialgebra are given by
\begin{equation}
\tilde{d}_{m}\omega^{\left(  c_{r},\check{\gamma}_{r}\right)  }=\frac
{1}{\left(  2m-2\right)  !}C_{\left(  a_{p_{1}}^{1},\check{\alpha}_{p_{1}}%
^{1}\right)  ...\left(  a_{p_{2m-2}}^{2m-2},\check{\alpha}_{p_{2m-2}}%
^{2m-2}\right)  }^{\ \ \ \ \ \ \ \ \ \ \ \ \ \left(  c_{r},\check{\gamma}%
_{r}\right)  }\omega^{\left(  a_{p_{1}}^{1},\check{\alpha}_{p_{1}}^{1}\right)
}...\omega^{\left(  a_{p_{2m-2}}^{2m-2},\check{\alpha}_{p_{2m-2}}%
^{2m-2}\right)  }.
\end{equation}
\end{theorem}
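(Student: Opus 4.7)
The plan is to reduce the new theorem to the generalized Maurer--Cartan equations of the resonant submultialgebra established in the preceding theorem, and then peel off the ``check'' component using the partition $S_{p}=\hat{S}_{p}\cup\check{S}_{p}$. Since the subsets are disjoint, every index $\alpha_{p}\in S_{p}$ is either a $\check{\alpha}_{p}\in\check{S}_{p}$ or a $\hat{\alpha}_{p}\in\hat{S}_{p}$, and accordingly the Maurer--Cartan forms (and dually the generators) split into the $\mathfrak{\check{G}}_{R}$ and $\overset{\wedge}{\mathfrak{G}}_{R}$ pieces defined in (\ref{rrm3})--(\ref{rrm4}).

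First I would recast the hypothesis $\left[\overset{\wedge}{\mathfrak{G}}_{R},\mathfrak{\check{G}}_{R},\ldots,\mathfrak{\check{G}}_{R}\right]_{S}\subset\overset{\wedge}{\mathfrak{G}}_{R}$ as a statement about structure constants: the component $C_{(a_{p_{1}}^{1},\alpha_{p_{1}}^{1})\cdots (a_{p_{n}}^{n},\alpha_{p_{n}}^{n})}^{\ \ \ \ \ \ \ \ \ \ \ \ (c_{r},\check{\gamma}_{r})}$ of the resonant submultialgebra vanishes whenever at least one argument carries a hat label. This rests on the product condition (\ref{rrm2}): a semigroup product of one hat element and several check elements lands in some $\hat{S}_{r}$, which is disjoint from $\check{S}_{r}$, so the selector $K^{\check{\gamma}_{r}}_{\cdots}$ extracted from that product is zero. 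Using commutativity of $S$, the same argument handles the case of more than one hat factor.

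Next I would substitute the splitting $\omega^{a_{p}}=\sum_{\check{\alpha}_{p}\in\check{S}_{p}}\lambda_{\check{\alpha}_{p}}\omega^{(a_{p},\check{\alpha}_{p})}+\sum_{\hat{\alpha}_{p}\in\hat{S}_{p}}\lambda_{\hat{\alpha}_{p}}\omega^{(a_{p},\hat{\alpha}_{p})}$ into the Maurer--Cartan equations (\ref{sm7}) of the resonant submultialgebra and extract the coefficient of $\lambda_{\check{\gamma}_{r}}$ on both sides, for $\check{\gamma}_{r}\in\check{S}_{r}$. On the right hand side, once every sum over $\alpha_{p_{l}}^{l}$ is split into a sum over $\check{\alpha}_{p_{l}}^{l}$ plus a sum over $\hat{\alpha}_{p_{l}}^{l}$, every term containing at least one hat index carries a structure constant that vanishes by the previous step, so only the all-check summation survives and reproduces the claimed formula. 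As a byproduct the same manipulation proves that the reduced system is closed, i.e.\ $\tilde{d}_{m}$ sends $\check{}$-labelled forms into wedges of $\check{}$-labelled forms, so $\mathfrak{\check{G}}_{R}$ really is a multialgebra.

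The main obstacle, and the step that deserves most care, is pinning down the vanishing statement for structure constants in full generality: the paper only records the ``one hat, rest check'' bracket inclusion (\ref{rrm5}), whereas the Maurer--Cartan reduction needs the analogous conclusion for any number of hat factors among the arguments. One has to check, using commutativity of $S$ together with the product condition (\ref{rrm2}), that \emph{any} mixed product of hats and checks still lands in some $\hat{S}_{r}$, so that the corresponding $K$-coefficient along a $\check{\gamma}_{r}$ is zero. Once this combinatorial bookkeeping is settled, the remainder of the proof is pure substitution and coefficient extraction, and the generalized Jacobi identity for the reduced multialgebra is inherited automatically from $\tilde{d}_{m}^{2}=0$ on the resonant submultialgebra.
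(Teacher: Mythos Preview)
Your approach is essentially the one the paper takes: split each $\omega^{a_{p}}$ along the partition $S_{p}=\hat{S}_{p}\cup\check{S}_{p}$, use the semigroup product hypothesis to kill the selector $K^{\check{\gamma}_{r}}_{\cdots}$ whenever a hat label occurs among the lower indices, and read off the reduced Maurer--Cartan equations for the $\check{\ }$-labelled forms. The paper in fact only writes out the binary vanishing $K_{\check{\beta}_{r}\hat{\gamma}_{t}}^{\check{\alpha}_{s}}=0$ explicitly and then jumps to the $(2m-2)$-index conclusion, so your flagging of the ``any number of hat factors'' case as the point requiring care is, if anything, more scrupulous than the paper's own argument.
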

\begin{proof}

If $S_{p}=\hat{S}_{p}\cup\check{S}_{p}$ is a partition of the subsets
$S_{p}\subset S$ that satisfy%
\begin{align}
\check{S}_{p}\cap\hat{S}_{p} &  =\varnothing,\label{w7}\\
\check{S}_{p}\times\hat{S}_{q} &  =%
{\displaystyle\bigcap\limits_{r\in i\left\{  p,q\right\}  }}
\hat{S}_{r},
\end{align}
then
\begin{equation}
\omega^{a_{p}}=\sum_{\lambda_{\check{\alpha}_{p}}\in\check{S}_{p}}%
\lambda_{\check{\alpha}_{p}}\omega^{\left(  a_{p},\check{\alpha}_{p}\right)
}+\sum_{\lambda_{\hat{\alpha}_{p}}\in\hat{S}_{p}}\lambda_{\hat{\alpha}_{p}%
}\omega^{\left(  a_{p},\hat{\alpha}_{p}\right)  },\label{w8}%
\end{equation}
where the set of indices $\left\{  \alpha_{p}\right\}  =\left\{  \check
{\alpha}_{p},\hat{\alpha}_{p}\right\}  $ is such that $\lambda_{\check{\alpha
}_{p}}\in\check{S}_{p}$ and $\lambda_{\hat{\alpha}_{p}}\in\hat{S}_{p}$. This
means that the dual resonant submultialgebra $\mathfrak{G}_{R}^{\ast}$ is
generated by the forms%
\begin{equation}
\left\{  \omega^{\left(  a_{p},\alpha_{p}\right)  }\right\}  =\left\{
\omega^{\left(  a_{p},\check{\alpha}_{p}\right)  },\omega^{\left(  a_{p}%
,\hat{\alpha}_{p}\right)  }\right\}, \label{w9}%
\end{equation}
so that $\mathfrak{G}_{R}^{\ast}$ is given by
\begin{equation}
\mathfrak{G}_{R}^{\ast}=V_{0}^{\ast}\oplus V_{1}^{\ast},\label{w10}%
\end{equation}
where $V_{0}^{\ast}=\left\{  \omega^{\left(  a_{p},\check{\alpha}_{p}\right)
}\right\}  $, $V_{1}^{\ast}=\left\{  \omega^{\left(  a_{p},\hat{\alpha}%
_{p}\right)  }\right\}  .$ \ The reduction condition is given by the condition
$\left[  V_{0},V_{1}\right]  \subset V_{1}$ or equivalently $C_{\left(
b_{r},\check{\beta}_{r}\right)  \left(  c_{t},\hat{\gamma}_{t}\right)
}^{\left(  a_{s},\check{\alpha}_{s}\right)  }=0$: Since
\begin{equation}
C_{\left(  b_{r},\check{\beta}_{r}\right)  \left(  c_{t},\hat{\gamma}%
_{t}\right)  }^{\left(  a_{s},\check{\alpha}_{s}\right)  }=K_{\check{\beta
}_{r}\hat{\gamma}_{t}}^{\check{\alpha}_{s}}C_{b_{r}c_{t}}^{a_{s}},\label{w11}%
\end{equation}
and that (\ref{w7}) says to us that $\lambda_{\check{\beta}_{r}}\in\check
{S}_{r}$, $\lambda_{\hat{\gamma}_{t}}\in\hat{S}_{t}$ we have
\begin{equation}
\lambda_{\check{\beta}_{r}}\lambda_{\hat{\gamma}_{t}}=K_{\check{\beta}_{r}%
\hat{\gamma}_{t}}^{\alpha_{r}}\lambda_{\alpha_{r}}\in%
{\displaystyle\bigcap\limits_{r\in i\left\{  p,q\right\}  }}
\hat{S}_{r},\label{ww12}%
\end{equation}
which imply $K_{\check{\beta}_{r}\hat{\gamma}_{t}}^{\check{\alpha}_{s}}=0$ and
therefore
\begin{equation}
C_{\left(  b_{r},\check{\beta}_{r}\right)  \left(  c_{t},\hat{\gamma}%
_{t}\right)  }^{\left(  a_{s},\check{\alpha}_{s}\right)  }=0\text{.}%
\end{equation}
This means that the set $\left\{  \omega^{\left(  a_{p},\check{\alpha}%
_{p}\right)  }\right\}  $ generates the so-called dual reduced multialgebra of
a resonant submultialgebra. The corresponding Maurer-Cartan equations for this
reduced multialgebra are%
\begin{equation}
\tilde{d}_{m}\omega^{\left(  c_{r},\check{\gamma}_{r}\right)  }=\frac
{1}{\left(  2m-2\right)  !}C_{\left(  a_{p_{1}}^{1},\check{\alpha}_{p_{1}}%
^{1}\right)  ...\left(  a_{p_{2m-2}}^{2m-2},\check{\alpha}_{p_{2m-2}}%
^{2m-2}\right)  }%
^{\ \ \ \ \ \ \ \ \ \ \ \ \ \ \ \ \ \ \ \ \ \ \ \ \ \ \ \left(
c_{r},\check{\gamma}_{r}\right)  }\omega^{\left(  a_{p_{1}}^{1},\check{\alpha
}_{p_{1}}^{1}\right)  }\cdot\cdot\cdot\omega^{\left(  a_{p_{2m-2}}%
^{2m-2},\check{\alpha}_{p_{2m-2}}^{2m-2}\right)  }.
\end{equation}
\end{proof}
In the compact notation we have
\begin{equation}
\tilde{d}_{m}\theta^{\left(  S\right)  }=\frac{1}{\left(  2m-2\right)
!}\left[  \theta^{\left(  S\right)  },\theta^{\left(  S\right)  }%
,\overset{2m-2}{\cdot\cdot\cdot\cdot},\theta^{\left(  S\right)  }\right]
\text{,}%
\end{equation}
where
\begin{equation}
\mathbf{\ }\theta^{\left(  S\right)  }=\omega^{\left(  c_{r},\check{\gamma
}_{r}\right)  }X_{\left(  c_{r},\check{\gamma}_{r}\right)  },%
\end{equation}
\begin{equation}
\left[  \theta^{\left(  S\right)  },\theta^{\left(  S\right)  },\overset
{2m-2}{\cdot\cdot\cdot\cdot},\theta^{\left(  S\right)  }\right]
=\omega^{\left(  a_{p_{1}}^{1},\check{\alpha}_{p_{1}}^{1}\right)  }\wedge
\cdot\cdot\cdot\wedge\omega^{\left(  a_{p_{2m-2}}^{2m-2},\check{\alpha
}_{p_{2m-2}}^{2m-2}\right)  }\left[  X_{\left(  a_{p_{1}}^{1},\check{\alpha
}_{p_{1}}^{1}\right)  },\cdot\cdot\cdot,X_{\left(  a_{p_{2m-2}}^{2m-2}%
,\check{\alpha}_{p_{2m-2}}^{2m-2}\right)  }\right]  \text{.}%
\end{equation}

\subsection{Recovering results of section 3}

Now we comment that the expansion method developed in section 3 can be
recovered in the S-expansion formalism for a particular election of the
semigroup. For example, we will show that the equations (\ref{g8})%
\begin{equation}
\tilde{d}_{m}\omega^{k_{s},\alpha}=\frac{1}{\left(  2m-2\right)  !}C_{\left(
i_{p_{1}}^{1},\beta^{1}\right)  \cdot\cdot\cdot\left(  i_{p_{2m-2}}%
^{2m-2},\beta^{2m-2}\right)  }^{\left(  k_{s},\alpha\right)  }\omega
^{i_{p_{1}}^{1},\beta^{1}}\wedge\cdot\cdot\cdot\wedge\omega^{i_{p_{2m-2}%
}^{2m-2},\beta^{2m-2}},\label{eqiv1} 
\end{equation}
\begin{equation}
C_{\left(  i_{p_{1}}^{1},\beta^{1}\right)  \cdot\cdot\cdot\cdot\cdot\left(
i_{p_{2m-2}}^{2m-2},\beta^{2m-2}\right)  }^{\left(  k_{s},\alpha\right)
}=C_{i_{p_{1}}^{1}\cdot\cdot\cdot\cdot i_{p_{2m-2}}^{2m-2}}^{k_{s}}%
\delta_{\beta^{1}+\cdot\cdot\cdot\cdot+\beta^{2m-2}}^{\alpha}\text{,}
\end{equation}
can be recovered in the language of S-expansions.

In fact, let us choose the following semigroup:%
\begin{equation}
S_{E}^{\left(  N\right)  }=\left\{  \lambda_{\alpha}\text{, }\alpha
=0,...,N+1\right\}, \label{r3}%
\end{equation}
with a multiplication rule given by%
\begin{equation}
\lambda_{\alpha}\lambda_{\beta}=\lambda_{H_{N+1}\left(  \alpha+\beta\right)
}=\delta_{H_{N+1}\left(  \alpha+\beta\right)  }^{\gamma}\lambda_{\gamma
}.\label{r4}%
\end{equation}

The two-selectors for $S_{E}^{\left(  N\right)  }$ read
\begin{equation}
K_{\alpha\beta}^{\gamma}=\delta_{H_{N+1}\left(  \alpha+\beta\right)  }%
^{\gamma},\label{r5}%
\end{equation}
where $\delta_{\sigma}^{\rho}$ is the Kronecker delta. The multiplication rule
(\ref{r4}) can be directly generalized to%
\begin{align}
\lambda_{\alpha_{1}}....\lambda_{\alpha_{n}}  & =\lambda_{H_{N+1}\left(
\alpha_{1}+...+\alpha_{n}\right)  }=\delta_{H_{N+1}\left(  \alpha
_{1}+...+\alpha_{n}\right)  }^{\gamma}\lambda_{\gamma}, \\
K_{\alpha_{1}...\alpha_{n}}^{\ \ \ \ \ \ \ \ \ \gamma}  & =\delta
_{H_{N+1}\left(  \alpha_{1}+...+\alpha_{n}\right)  }^{\gamma}.
\end{align}

Consider now a higher order Lie algebra $\left(  \mathcal{G},\left[
,...,\right]  \right)  $ of order $n=2m-2$, whose generalized MC equations are
given by
\begin{equation}
\tilde{d}_{m}\omega^{A}=\frac{1}{\left(  2m-2\right)  !}C_{B_{1}\cdot
\cdot\cdot\cdot\cdot\cdot B_{2m-2}}%
^{\ \ \ \ \ \ \ \ \ \ \ \ \ \ A}\omega^{B_{1}}\cdot\cdot
\cdot\cdot\cdot\omega^{B_{2m-2}}.
\end{equation}
Then the generalized MC equations of the $S_{E}^{\left(  N\right)  }$-expanded
Lie multialgebra are given by%
\begin{equation}
\tilde{d}_{m}\omega^{\left(  A,\alpha\right)  }=\frac{1}{\left(  2m-2\right)
!}C_{\left(  B_{1},\beta_{1}\right)  ...\left(  B_{2m-2},\beta_{2m-2}\right)
}^{\ \ \ \ \ \ \ \ \ \ \ \ \ \left(  A,\alpha\right)  }\omega^{\left(
B_{1},\beta_{1}\right)  }...\omega^{\left(  B_{2m-2},\beta_{2m-2}\right)  },%
\end{equation}
\begin{equation}
C_{\left(  B_{1},\beta_{1}\right)  ......\left(  B_{2m-2},\beta_{2m-2}\right)
}^{\ \ \ \ \ \ \ \ \ \ \ \ \ \left(  A,\alpha\right)  }=C_{B_{1}%
......B_{2m-2}}^{\ \ \ \ \ \ \ \ \ \ \ \ \ A}\delta_{H_{N+1}\left(  \beta
_{1}+...+\beta_{2m-2}\right)  }^{\gamma},%
\end{equation}
where $\alpha,\beta_{1},...,\beta_{2m-2}=0,1,...,N,N+1$.

In section 4.3, we used latin indices $i,j,k$ when we restrict the greek
indices (of the semigroup elements) to take values in $\left\{
0,1,...,N\right\}  $, following the convention adopted in \cite{sexpansion},
\cite{dualSexpansion} and \cite{sexpansion1}. However, in section 3 latin
indices were used to label the basis elements of the algebra or multialgebra
and their dual forms. This was done so in order to obtain a direct
generalization of the expansion method \cite{deazcarraga1} to the higher order
Lie algebra case. To make a consistent comparison we will not use latin
indices to perform the $0$-reduction. Instead we continue to use greek
indices, but write explicitly that they are restricted to take values in
$\left\{  0,1,...,N\right\}  $.

Therefore, when the greek indices cannot take the value $N+1$, we have
\begin{equation}
\delta_{H_{N+1}\left(  \beta_{1}+...+\beta_{2m-2}\right)  }^{\gamma}%
=\delta_{\beta_{1}+...+\beta_{2m-2}}^{\gamma},%
\end{equation}
and the $0$-reduced multialgebra of the $S_{E}^{\left(  N\right)  }$-expanded
Lie multialgebra is given by the following generalized Maurer-Cartan equations%
\begin{equation}
\tilde{d}_{m}\omega^{\left(  A,\alpha\right)  }=\frac{1}{\left(  2m-2\right)
!}C_{\left(  B_{1},\beta_{1}\right)  ...\left(  B_{2m-2},\beta_{2m-2}\right)
}^{\ \ \ \ \ \ \ \ \ \ \ \ \ \left(  A,\alpha\right)  }\omega^{\left(
B_{1},\beta_{1}\right)  }...\omega^{\left(  B_{2m-2},\beta_{2m-2}\right)
},\label{equiv2}%
\end{equation}%
\begin{align}
C_{\left(  B_{1},\beta_{1}\right)  ......\left(  B_{2m-2},\beta_{2m-2}\right)
}^{\ \ \ \ \ \ \ \ \ \ \ \ \ \left(  A,\alpha\right)  }  & =C_{B_{1}%
......B_{2m-2}}^{\ \ \ \ \ \ \ \ \ \ \ \ \ A}\delta_{\beta_{1}+...+\beta
_{2m-2}}^{\gamma},\\
\alpha,\beta_{1},...,\beta_{2m-2}  & =0,1,...,N\text{.} \nonumber%
\end{align}
The equivalence between (\ref{eqiv1}) and (\ref{equiv2}) is explicit if we
consider that the vector space of the original multialgebra is split into a
sum of two vector spaces $\mathcal{G=}V_{0}\oplus V_{1}.$ \ Then the dual
basis is descomposed as $\left\{  \omega^{A}\right\}  =$ $\left\{
\omega^{i_{0}}\right\}  \cup\left\{  \omega^{i_{1}}\right\}  $ where
$\omega^{i_{0}}\in V_{0}$ and $\omega^{i_{1}}\in V_{1}$.

\section{Comments and Possible Developments}

We have shown that the successful expansion methods developed in refs.
\cite{deazcarraga1}, \cite{dualSexpansion} (see also \cite{hatsuda},
\cite{deazcarraga2}) can be generalized so that they permit obtaining new
higher-order Lie algebras of increasing dimensions from $\left(
\mathcal{G},\left[  ,...,\right]  \right)  $ by a geometric procedure based on
expanding the generalized Maurer-Cartan equations.

The main results of this paper are: the generalizations of the expansion
methods developed in refs. \cite{deazcarraga1}, \cite{dualSexpansion} and we
give the general structure of the expansion method, as well as to construct
the dual S-expansion procedure of higher-order Lie algebras.

The expansion procedures considered here could play an important role in the
context of gravity in higher dimensions. \ In fact, it seems likely that it is
possible, in the context of a Chern-Simons action, to construct a theory that
describes a consistent coupling of higher-spin fields to a particular form of
Lovelock gravity.

This work was supported in part by Direcci\'{o}n de Investigaci\'{o}n,
Universidad de Concepci\'{o}n through Grant \# 208.011.048-1.0 and in part by
FONDECYT through Grants \#s 1080530 and 1070306 . Two of the authors (R.C. and
N.M) were supported by grants from the Comisi\'{o}n Nacional de
Investigaci\'{o}n Cient\'{\i}fica y Tecnol\'{o}gica CONICYT and from the
Universidad de Concepci\'{o}n, Chile. One of the authors (A.P) wishes to thank
to S. Theisen for his kind hospitality at the M.P.I f\"{u}r Gravitationsphysik
in Golm where part of this work was done. He is also grateful to German
Academic Exchange Service (DAAD) and Consejo Nacional de Ciencia y
Tecnolog\'{\i}a (CONICYT) for financial support.

\end{document}